\documentclass[journal]{IEEEtran}
\pagenumbering{gobble}
\usepackage{stmaryrd}
\usepackage{subfig}
\usepackage{xspace}
\usepackage{url}
\usepackage[overload]{empheq}
\usepackage{cleveref} 
\usepackage{amsmath}
\usepackage{amssymb}
\usepackage{amsthm}
\usepackage{cite}
\usepackage[font={small}]{caption}

\newcommand{\link}{\ensuremath{(i,j)}\xspace}
\newcommand{\lien}[2]{\ensuremath{(#1,#2)}\xspace}
\newcommand{\ensP}{\ensuremath{\mathcal{E}_P}\xspace}
\newcommand{\ensNP}{\ensuremath{\mathcal{E}_{NP}\xspace}}
\newcommand{\ensA}{\ensuremath{\mathcal{E}_{A}\xspace}}

\DeclarePairedDelimiter\ceil{\lceil}{\rceil}

\usepackage{tikz}
\usetikzlibrary{shapes}
\usetikzlibrary{graphs}
\usetikzlibrary{calc}

\newcommand{\set}[1]{\ensuremath{\mathcal{S}_#1}}
\newcommand{\crit}[1]{\ensuremath{k_{#1}}}
\newcommand{\card}[1]{\ensuremath{n_{#1}}}
\usepackage{tkz-berge}
\newtheorem{proposition}{Proposition}
\newtheorem{definition}{Definition}
\newtheorem{corollary}{Corollary}

\title{Optimal Secure Multi-Layer IoT Network Design}

\author{Juntao Chen,~\IEEEmembership{Student Member,~IEEE,} Corinne Touati, and Quanyan Zhu,~\IEEEmembership{Member,~IEEE} 
\thanks{J. Chen and Q. Zhu are with the Department of Electrical and Computer Engineering, Tandon School of Engineering, New York University, Brooklyn, NY 11201, USA. email: \{jc6412,qz494\}@nyu.edu}
\thanks{
         C. Touati is with INRIA, F38330 Montbonnot Saint-Martin, France. email: corinne.touati@inria.fr}%

\thanks{This research is partially supported by a DHS grant through Critical Infrastructure Resilience Institute (CIRI), grants ECCS-1847056, CNS-1544782 and SES-1541164 from National Science of Foundation (NSF), and grant W911NF-19-1-0041 from ARO.}
}

\begin{document}

\bstctlcite{IEEEexample:BSTcontrol}

\maketitle
\thispagestyle{empty}
\pagestyle{empty}

\tikzset{protege/.style = {line width=4pt}}
\tikzset{attaque/.style = {style=loosely dashed}}
\tikzset{both/.style = {line width=3pt, style=loosely dashed}}
\tikzset{multi/.style = {
shape = rectangle,
line width=1pt,
inner sep = 1.5pt,
draw
}}

\begin{abstract}
With the remarkable growth of the Internet and communication technologies over the past few decades, Internet of Things (IoTs) is enabling the ubiquitous connectivity of heterogeneous physical devices with software, sensors, and actuators. IoT networks are naturally two-layer with the cloud and cellular networks coexisting with the underlaid device-to-device (D2D) communications. The connectivity of IoTs plays an important role in information dissemination for mission-critical and civilian applications. However, IoT communication networks are vulnerable to cyber attacks including the denial-of-service (DoS) and jamming attacks, resulting in link removals in IoT network.  In this work, we develop a heterogeneous IoT network design framework in which a network designer can add links to provide additional communication paths between two nodes or secure links against attacks by investing resources. By anticipating the strategic cyber attacks, we characterize the optimal design of secure IoT network by first providing a lower bound on the number of links a secure network requires for a given budget of protected links, and then developing a method to construct networks that satisfy the heterogeneous network design specifications. Therefore, each layer of the designed heterogeneous IoT network is resistant to a predefined level of malicious attacks with minimum resources. Finally, we provide case studies on the Internet of Battlefield Things (IoBT) to corroborate and illustrate our obtained results.
\end{abstract}

\begin{IEEEkeywords}
Optimal Design, Two-Layer Networks, Security, Connectivity, Internet of Battlefield Things
\end{IEEEkeywords}

\section{Introduction}
 Internet of Things (IoTs) have witnessed a tremendous development with a variety of applications, such as virtual reality,  intelligent supply chain \cite{Pang} and smart home \cite{Wang}. In this highly connected world, IoT devices are massively deployed and connected to cellular or cloud networks. 
For example, in smart grids, wireless sensors are adopted to collect the data of buses and power transmission lines. The collected data can then be sent to a supervisory control and data acquisition (SCADA) center through cellular networks for grid monitoring and decision planning purposes. 
Smart home is another example of IoT application. Various devices and appliances in a smart home including air conditioner, lights, TV, tablets, refrigerator and smart meter are interconnected through the cloud, improving the quality of the living.

IoT networks can be viewed as multi-layer networks with the existing infrastructure networks (e.g., cloud and cellular networks) and the underlaid device networks. 
The connections between different objects in the IoT network can be divided into two types. 
Specifically, the communications between devices themselves are called \textit{interlinks}, while the devices communicate with the infrastructure through \textit{intralinks}.
 The connectivity of IoT networks plays an important role in information dissemination. On the one hand, devices can communicate directly with other devices in the underlaid network for local information. On the other hand, devices can also communicate with the infrastructure networks to maintain a global situational awareness. In addition, for IoT devices with insufficient on-board computational resources such as wearables and drones, they can outsource heavy computations to the data centers through cloud networks, and hence extend the battery lifetime.
Vehicular network is an illustrative example for understanding the two-tier feature of IoT networks \cite{Dey}. In an intelligent transportation network, vehicle-to-vehicle (V2V) communications enable two vehicles to communicate and exchange information, e.g., accidents, speed alerts, notifications. In addition, vehicles can also communicate with roadside infrastructures or units (RSU) that belong to one or several service providers for exchanging various types of data related to different applications including GPS navigation, parking and highway tolls inquiry. In this case, the vehicles form one network while the infrastructure nodes form another network. Due to the interconnections between two networks, vehicles can share information through infrastructure nodes or by direct V2V communications. 

IoT communication networks are vulnerable to cyber attacks including the denial-of-service (DoS) and jamming attacks \cite{Abomhara}. To compromise the communication between two specific devices, the attacker can adopt the selective jamming attack \cite{law2009energy,lazos2011selective}.  More specifically, the attacker selectively targets specific channels and packets which disrupts the communications by transmitting a high-range or high-power interference signal. This adversarial behavior leads to communication link removals in IoT network. Therefore, to maintain the connectivity of devices, IoT networks need to be secure and resistant to malicious attacks. For example, V2V communication links of a car can be jammed, and hence the car loses the real-time traffic information of the road which may further cause traffic delays and accidents especially in the futuristic self-driving applications. Hence, IoT networks should be constructed in a tactic way by anticipating the cyber attacks. Internet of Battlefield Things (IoBT) is another example of mission-critical IoT systems. As depicted in Fig. \ref{battlefield}, in IoBT networks, a team of unmanned aerial vehicles (UAVs) serves as one layer of wireless relay nodes for a team of unmanned ground vehicles (UGVs) and soldiers equipped with wearable devices to communicate between themselves or exchange critical information with the command-and-control nodes. The UAV network and the ground network naturally form a two-layer network in a battlefield which can be susceptible to jamming attacks. It is essential to design communication networks that can allow the IoBT networks to be robust to natural failures and secure to cyber attacks in order to keep a high-level situational awareness of agents in a battlefield. 

\begin{figure}[!t]
\begin{centering}
\includegraphics[width=1\columnwidth]{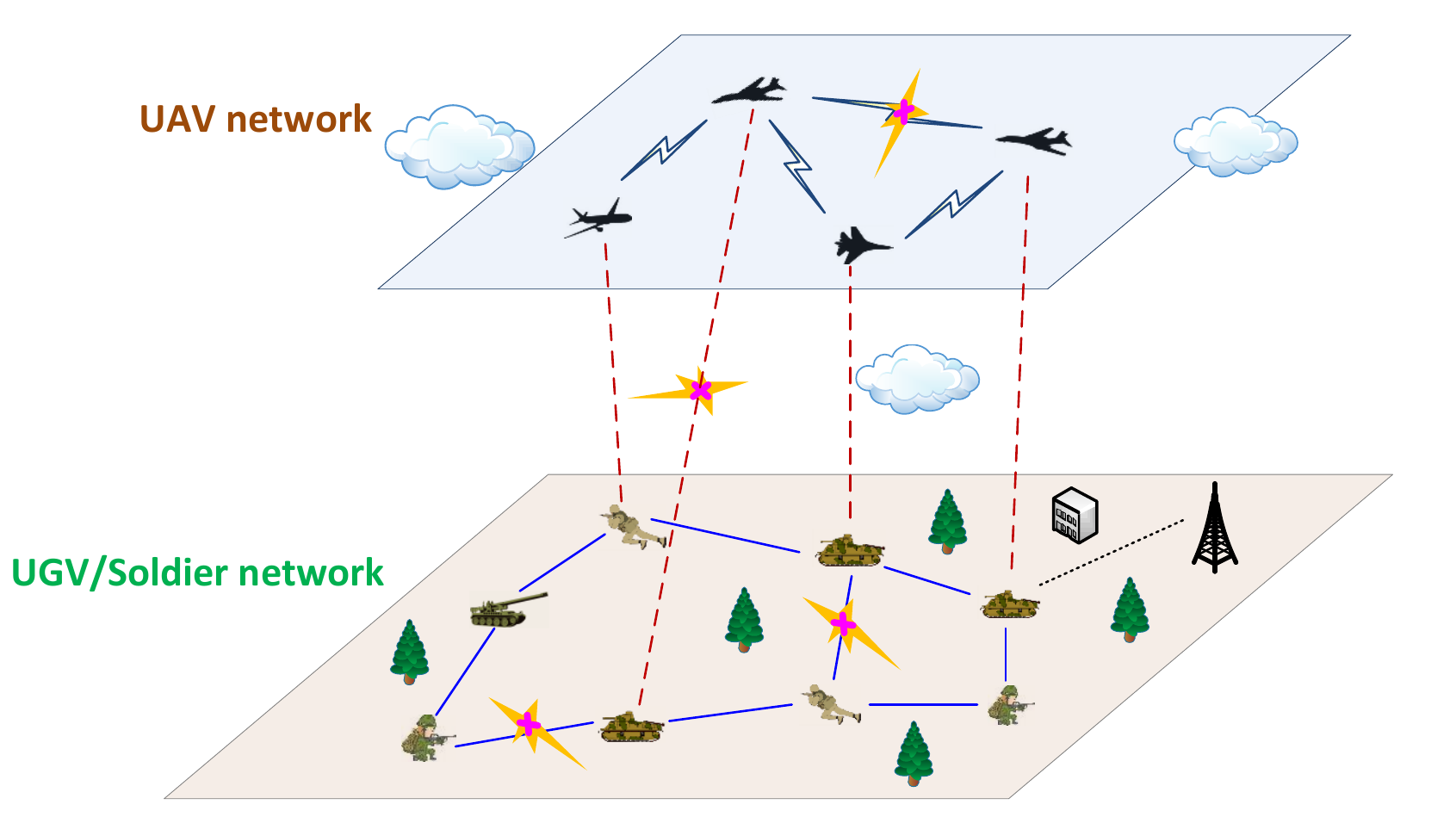} 
\par\end{centering}
\caption{\label{battlefield}
In IoBT networks, a team of UAVs and a group of soldiers and UGVs execute missions cooperatively. The agents in the battlefield share critical information through D2D communications. The UAV network and  ground network form a two-layer network which faces cyber threats, e.g., jamming attacks which can lead to link removals.}
\end{figure}

Due to heterogeneous and multi-tier features of the IoT networks, the required security levels can vary for different networks. For example, in IoBT networks, the connectivity of UAV networks requires a higher security level than the ground network if the UAVs are more likely to be targeted by the adversary. Similarly, in vehicular networks, the communication links between RSUs need a high-level protection when they anticipate more attacks than the vehicles do.  Therefore, it is imperative to design secure IoT networks resistant to link attacks and maintain the two-layer network connectivity with heterogeneous security requirements simultaneously.
 To this end, we present a heterogeneous IoT network design framework in which network links are vulnerable to malicious attacks. To enhance the security and the robustness of the network, an IoT network designer can add extra links to provide additional communication paths between two nodes or secure links against failures by investing resources to protect the links. To allocate links, note that when the nodes in the IoT network are within a short distance, then the classical wireless communication technologies can be adopted including WiFi, Bluetooth, and Zigbee. In comparison, when the distance is large, then one option that has recently emerged is called ultra narrow band (UNB) \cite{lassen2014long} that uses the random frequency and time multiple access \cite{li20172d}. The UNB is dedicated for mission-critical IoT systems for providing reliable communication services in long range. The goal of the multi-tier network design is to make the network connectivity resistant to link removal attacks by anticipating the worst attack behaviors. 
Different from previous works \cite{Marcin,Christophe} which have focused on the secure design of single-layer networks, in our current work, the network designer needs to take into account the heterogeneous features of the IoT networks by imposing different security requirements on each layer which presents a new set of challenges for network design. 

In this paper, we focus on a two-layer IoT network and aim to design each network resistant to different number of link failures with minimum resources.  We characterize the optimal strategy of the secure network design problem by first developing a lower bound on the number of links a secure network requires for a given budget of protected links. Then, we provide necessary and sufficient conditions under which the bounds are achieved and present a method to construct an optimal network that satisfies the heterogeneous network design specifications with the minimum cost. Furthermore, we characterize the robust network topologies which optimally satisfy a class of security requirements. These robust optimal networks are applicable to the cases when the cyber threats are not perfectly perceived or change dynamically, typically happening in the mission-critical scenarios when the attacker's action is partially observable.

Finally, we use IoBT as a case study to illustrate the analytical results and obtain insights in designing secure networks. We consider a mission-critical battlefield scenario in which the UAV network anticipates higher cyber threats than the soldier network, and the number of UAVs is less than the number of soldiers. We observe that as the cost of forming a protected communication link becomes smaller, more secure connections are formed in the optimal IoBT network. In addition, the designed network is resilient to the change of agents in the battlefield. 
We also study the reconfiguration and resilience of the UAV network as nodes leave and join the battlefield.

The main contributions of this paper are summarized as follows:
\begin{enumerate}
\item We propose a two-layer heterogeneous framework for IoT networks consisting of various devices, where each layer network faces different levels of cyber threats.
\item By utilizing the tools from graph theory and optimization, we analyze the lower bounds of the number of required links for the IoT network being connected by anticipating the worst case attacks.
\item We derive optimal strategies for creating secure two-layer IoT networks with heterogeneous security requirements and provide their construction guidelines under different regimes in terms of threat levels and number of nodes. We also identify the robust optimal strategies for the IoT network with dynamic cyber threat levels.
\item We apply the optimal design principles to crucial IoBT scenarios and provide insights into the design of secure and resilient interdependent UAV and soldier networks in the IoBT. 
\end{enumerate}

\subsection{Related Work}
Due to the increasing cyber threats, IoT security becomes a critical concern nowadays \cite{Weber}. Depending on the potential of cyber attackers, IoT networks face heterogeneous types of attacks \cite{Nia}. For example, attackers can target the edge computing nodes in IoT, e.g., RFID readers and sensor nodes. Some typical adversarial scenarios include the node replication attack by replicating one node's identification number \cite{Parno}, DoS by battery draining, sleep deprivation, and outage attacks \cite{Khouzani,Vasserman}. The attackers can also launch attacks through the IoT communication networks. Quintessential  examples include the eavesdropping attack where the attacker captures the private information over the channel, and utilizes the information to design other tailored attacks \cite{Mukherjee}. Another example is the data injection attack where the attacker can inject fraudulent packets into IoT communication links through insertion, manipulation, and replay techniques \cite{Zhou4625802}.  In our work, we focus on the jamming and DoS attacks which lead to the link removal in IoT communication networks.

To mitigate the cyber threats in IoT, a large number of works have focused on addressing the security issues by using different methodologies \cite{Abomhara}.  
A contract-theoretic approach has been adopted to guarantee the performance of security services in the Internet of controlled things \cite{chen2016optimal,chen2017} and mitigate the systemic cyber risks \cite{chen2018linear}.
 The authors in \cite{Zhou} have proposed a media-aware security architecture for facilitating multimedia applications in the IoT. \cite{chen2017dynamic} has proposed a dynamic game model including pre-attack defense and post-attack recovery phases in designing resilient IoT-enabled infrastructure networks.  Strategic security investment under bounded rationality in IoT has been studied in \cite{chen2018security,chen2019-TIFS}. The authors in \cite{pawlick2019istrict} have developed an interdependent strategic trust mechanism to defend against cyber attacks in IoT.

In this work, we investigate the secure design of IoT network by considering its connectivity measure \cite{Marcin,Christophe,Chen_ACC,Chen_CDC} through the lens of graph theory \cite{gross2004handbook}. Comparing with the previous works \cite{Marcin,Christophe} that have focused on a single-layer adversarial network design, we model the IoT as a two-layer network and strategically design each layer of the network with heterogeneous security requirements. The current work is also related to the secure and resilient interdependent critical infrastructures \cite{huang2017large,huang2017factored,huang2018distributed,
huang2018factored,chen2017interdependent} in which a holistic design approach is required.

\subsection{Organization of the Paper} 
The rest of the paper is organized as follows. Section \ref{Two_Layer} formulates the heterogeneous two-layer IoT network design framework. Analytical results including the lower bounds of links and optimal IoT network design strategies are presented in Section \ref{optimal_strategy}. Case studies of IoBT networks are provided in Section \ref{case_study}, and Section \ref{conclusion} concludes the paper.

\section{Heterogeneous Two-Layer IoT Network\\ Design Formulation}\label{Two_Layer}
In this section, we formulate a two-layer secure IoT network design problem. Due to the heterogeneous features of IoT networks, the devices at each layer face different levels of cyber threats. To maintain the global situational awareness, the designer aims to devise an IoT  network with a minimum cost, where each layer of IoT network should remain connected in the presence of a certain level of adversarial attacks.

Specifically, we model the two-layer IoT network with two sets of devices or nodes\footnote{Nodes and vertices in the IoT network refer to the devices, and they are used interchangeably. Similar for the terms edges and links.} denoted by \set{1} and \set{2}.  Each set of nodes is of a different type. Specifically, denote by $\card{1} := |\set{1}|$ and $\card{2} := |\set{2}|$ the number of nodes of type $1$ and $2$, respectively, where $|\cdot|$ denotes the cardinality of a set. We unify them to $n = \card{1}+\card{2}$ \emph{vertices} that are numbered from $1$ to $n$ starting from nodes in \set{1}. Thus, a node labeled $i$ is of type $1$ if and only if $i \leq \card{1}$. Note that each set of nodes forms an IoT subnetwork. Together with the interconnections between two sets of nodes, the subnetworks form a two-layer IoT network. Technically, the communication protocols between nodes within and across different layers can be either the same or heterogeneous depending on the adopted technology by considering the physical distance constraints. Furthermore, the nodes' functionality can be different in two subnetworks depending on their specific tasks. In this paper, our focus lies in the high-level of network connectivity maintenance.

In standard graph theory, an \emph{edge} (or a \textit{link}) is an unordered pair of vertices: $(i,j) \in \llbracket 1, n \rrbracket^2$, $i\neq j$, where $\llbracket 1, n \rrbracket^2$ is a set including all the pairs of integers between 1 and $n$. We recall that two vertices (nodes) $i_0$ and $i_L$ are said {\it connected} in a graph of nodes $\set{1} \cup \set{2}$ and a set of edges $\mathcal{E}$ if there exists a path between them, i.e., a finite alternating sequence of nodes and distinct links: $i_0, \, (i_0,i_1), \, i_1, (i_1,i_2), \, i_2, \, ..., \, (i_{L-1},i_L),\,  i_L$, where $i_l \in \set{1} \cup \set{2}$ and $(i_{l-1}, i_{l}) \in \mathcal{E}$ for all $1 \leq l \leq L$.

In our IoT networks, the communication links (edges) are vulnerable to malicious attacks, e.g., jamming and DoS, which result in link removals. To keep the IoT network resistant to cyber attacks, the network designer can either invest (i) in redundancy of the path, i.e., using extra links so that two nodes can communicate through different paths, or (ii) in securing its links against failures where we refer to these special communication edges as \emph{protected links}. These protected links can be typically designed using moving target defense (MTD) strategies, where the designer randomizes the usage of communication links among multiple created channels between two nodes \cite{zhu}. More precisely, we consider that for the designer, the cost per non-protected link created is $c_{NP}$ and the cost per protected link created is $c_P$. It is natural to have $c_{NP} \leq c_P$ since creation of a protected link is more costly than that of a non-protected one. For clarity, we assume that the costs of protected or non-protected links at two different layers are the same. If the costs of creating links are different in two subnetworks, then the network designer needs to capture this link creation difference in his objective \cite{chen2017heterogeneous}. Let $\ensNP\subseteq \mathcal{E}$ be the set of non-protected links and $\ensP\subseteq \mathcal{E}$ be the set of protected links in the IoT network, and $\ensNP \cup \ensP = \mathcal{E}$. In this work, we assume that the protection is perfect, i.e., links will not fail under attacks if they are protected.  Therefore, an adversary does not have an incentive to attack protected links. Denote the strategy of the attacker by $\ensA$, then it is sufficient to consider attacks on a set of links $\ensA \subseteq \ensNP$. Furthermore, we assume that the network designer can allocate links between any nodes in the network. In the scenarios that setting up communication links between some nodes is not possible, then the network designer needs to take into account this factor as constraints when designing networks.

The heterogeneous features of IoT networks naturally lead to various security requirements for devices in each subnetwork. Hence, we further consider that the nodes in IoT network have different criticality levels (\crit{1} and \crit{2} for nodes of type $1$ and $2$, respectively, with $\crit{1}, \crit{2} \in \llbracket 0, |\ensNP| \rrbracket$, where $\llbracket a,b \rrbracket$ denotes a set of  integers between $a$ and $b$). It means that subnetworks 1 and 2 should remain connected after the compromise of \textit{any} \crit{1} and \crit{2} links in $\ensNP$, respectively. Thus, the designer needs to prepare for the worst case of link removal attacks when designing the two-layer IoT network. Our problem is beyond the robust network design where the link communication breakdown is generally caused by nature failures. In this paper, we consider the link removal which is a consequence of cyber attacks, e.g., jamming and DoS attack. Furthermore, in our problem formulation, the network designer can allocate protected links which can be seen as a security practice, and he takes into account the strategic behavior of attackers, and designs the optimal secure networks. Without loss of generality, we have the following two assumptions:
\begin{enumerate}
\item[(A1)] $\crit{1} \leq \crit{2}$.
\item[(A2)] $n_1 \geq 1$, $n_2 \geq 1$.
\end{enumerate}
 Specifically, (A1) indicates that the IoT devices in subnetwork 2 are relatively more important than those in subnetwork 1, and thus subnetwork 2 should be more resistant to cyber attacks. Another interpretation of (A1) can also be that subnetwork 2 faces a higher level of cyber threats, and the network designer needs to prepare a higher security level for subnetwork 2. In addition, (A2) ensures that no IoT subnetwork is empty.

More precisely, consider a set of vertices $\set{1} \cup \set{2}$ and edges $\ensP \cup \ensNP$. 
 The IoT network designer needs to guarantee the following two cases:
\begin{itemize}
\item[(a)]  if $|\ensA| \leq \crit{1}$, then all nodes remain attainable in the presence of attacks, i.e., $\forall i, j \in \set{1} \cup  \set{2}$, there exists a path in the graph $(\set{1} \cup \set{2},\ensP \cup \ensNP\backslash \ensA)$ between $i$ and $j$.
\item[(b)] if $|\ensA| \leq \crit{2}$, nodes of type $2$ remain attainable after attacks, i.e., $\forall i, j \in \set{2}$, there exists a path in the graph $(\set{1} \cup \set{2},\ensP \cup \ensNP\backslash \ensA)$ between $i$ and $j$.
\end{itemize}

\textit{\textbf{Remark:}} We denote the designed network satisfying (a) and (b) above by $s^D := (\set{1} \cup \set{2}, \ensP \cup \ensNP)$, and call such heterogeneous IoT networks  $(\crit{1}, \crit{2})$-\textit{resistant} (with $k_1 \leq k_2$). The proposed $(k1,k2)$-resistant metric provides a flexible network design guideline by specifying various security requirements on different network components. Furthermore, in this work, we care about each node's degree which requires an explicit agent-level quantification. Then, the $(k_1,k_2)$-resistant metric is more preferable than measure of the proportion of links in each subnetwork, where the latter metric only gives a macroscopic description of the link allocation over two subnetworks.

Given the system's parameters \set{1}, \set{2}, \crit{1}, and \crit{2},  an optimal strategy for the IoT network designer is the choice of a set of links $\ensP \cup \ensNP$ which solves the optimization problem:
\begin{align*}
\min_{\ensP,\ensNP}\quad &c_p |\ensP| + c_{NP}|\ensNP|\\
\text{s.t.}\ \ &\ensP \subseteq \llbracket 1, n \rrbracket^2, \ensNP \subseteq \llbracket 1, n \rrbracket^2, \\
&\ensP \cap \ensNP = \emptyset,\\
& s^D = (\set{1} \cup \set{2}, \ensP \cup \ensNP)\ \mathrm{is}\ (\crit{1}, \crit{2})\mathrm{-resistant}.
\end{align*}
From the above optimization problem, the optimal network design cost directly depends on $c_P$ and $c_{NP}$. In addition, as we will analyze in Section \ref{optimal_strategy}, the cost ratio $\frac{c_P}{c_{NP}}$ plays a critical role in the optimal strategy design.

Under the optimal design strategy, compromising a node with low degree, i.e., $k_1$ degree in subnetwork 1 and $k_2$ degree in subnetwork 2, is not feasible for the attacker, since the degree of any nodes without protected link in the network is larger than  \crit{1} or \crit{2} depending on the nodes' layers.

Note that the above designer's constrained optimization problem is not straightforward to solve. First, the size of search space increases exponentially as the number of nodes in the IoT network grows. Therefore, we need to find a scalable method to address the optimal network design. Second, the heterogeneous security requirements make the problem more difficult to solve. On the one hand, two subnetworks are separate since they have their own design standards. On the other hand, we should tackle these two layers of network design in a holistic fashion due to their natural couplings.

\begin{figure*}[htb]
\centering
\begin{tikzpicture}[line cap=round,line join=round, scale=0.12]
\draw [<->,thick] (0,50) node (yaxis) [right] {\small {\it Lower bound} on the number of non-protected links}
        |- (100,0) node [scale=1] (xaxis) [right] { $p$};
\newcommand\XX{45}
\newcommand\XA{30}\newcommand\YA{30}
\newcommand\XB{40}\newcommand\YB{20}
\newcommand\XC{80}\newcommand\YC{7}
\newcommand\XD{90}\newcommand\YD{0}

\node [scale=0.8]  (ZZ) at (-2,-2) {$0$};
\node [scale=0.8]  (A) at (\XA,-2) {$n_2-2$};
\node (H) [left] at (0,\XX) {$\frac{n_1(k_1+1)+n_2(k_2+1)}{2}$};
\node [scale=0.6]  (AA) at (\XA,-0.1) {};
\draw (\XA,-0.5) -- (\XA,0.5);
\node [thick,draw,circle,inner sep=2pt,fill] (ZZAAH) at (0,\XX) {};
\node (ZZHTexte) at (2,\XX+2) {A};
\node [thick,draw,circle,inner sep=2pt,fill] (AAH) at (\XA,\YA) {};
\node (AAHTexte) at (\XA+2,\YA+2) {B};
\node (AHA) [left] at (0,\YA) {$\frac{n_1(k_1+1)+2(k_2+1)}{2}$};
\node [scale=0.8]  (B) at (\XB,-2) {$n_2-1$};
\node [scale=0.6]  (BB) at (\XB,-0.1) {};
\draw (\XB,-0.5) -- (\XB,0.5);
\node [thick,draw,circle,inner sep=2pt,fill] (BBH) at (\XB,\YB) {};
\node (BBHTexte) at (\XB+2,\YB+2) {C};
\node (BHB) [left] at (0,\YB) {$\frac{(n_1+1)(k_1+1)}{2}$};
\node [scale=0.8]  (C) at (\XC,-2) {$n_1+n_2-2$};
\node [scale=0.4]  (CC) at (\XC,-1) {};
\draw (\XC,-0.5) -- (\XC,0.5);
\node [thick,draw,circle,inner sep=2pt,fill] (CCH) at (\XC,\YC) {};
\node (CCHTexte) at (\XC+2,\YC+2) {D};
\node [thick,draw,circle,inner sep=2pt,fill] (ZZCCH) at (\XD,\YD) {};
\node (DDHTexte) at (\XD+2,\YD+2) {E};
\node (CHC) [left] at (0,\YC) {${k_1+1}$};
\node [scale=0.8]  (D) at (\XD,-4) {$n_1+n_2-1$};
\node [scale=0.6]  (DD) at (\XC,-0.1) {};
\draw (\XC,-0.5) -- (\XC,0.5);

\draw [dotted] (AA) -- (AAH);
\draw [dotted] (BB) -- (BBH);
\draw [dotted] (CC) -- (CCH);

\draw [dotted] (AHA) -- (AAH);
\draw [dotted] (BHB) -- (BBH);
\draw [dotted] (CHC) -- (CCH);

\draw [line width=1.3pt, color=blue](ZZAAH) -- (AAH);
\draw [line width=1.3pt, color=black!45!green] (AAH) -- (BBH);
\draw [line width=1.3pt, color=brown] (BBH) -- (CCH);
\draw [line width=1.3pt, color=red] (CCH) -- (ZZCCH);

\draw [line width=1.3pt, dotted] (ZZAAH) -- (BBH);
\draw [line width=1.3pt, dotted] (BBH) -- (ZZCCH);


\node [color=blue] (AZE) at (20,40) {Slope $\frac{k_2+1}{2}$};
\node [color=black!45!green] (AZE) at (47,28) {Slope $\frac{(k_2+1)+(k_2-k_1)}{2}$};
\node [color=brown] (AZE) at (65,18) {Slope $\frac{k_1+1}{2}$};
\node [color=red] (AZE) at (93,7) {Slope $k_1+1$};
\end{tikzpicture}
\caption{Lower bound on the number of non-protected links as a function on the number of protected links in the IoT network. Note that all the slopes of lines are quantified in their absolute value sense for convenience.}\label{fig:illus} 
\end{figure*}
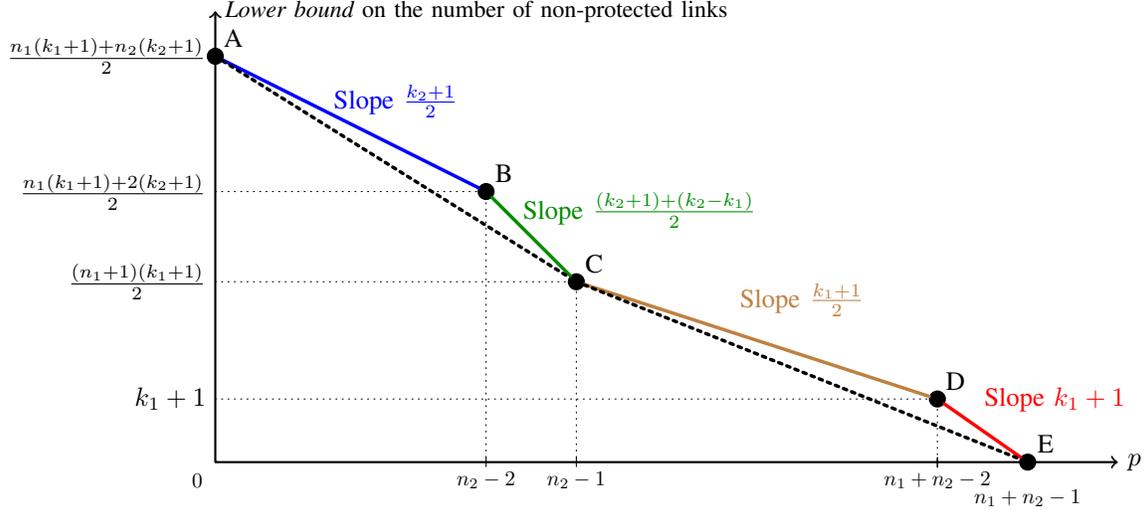

\section{Analytical Results and Optimal IoT Network Design}\label{optimal_strategy}
 
In this section, we provide an analytical study of the designer's optimal strategy, i.e., the optimal two-layer IoT network design.

We first develop, for given system parameters \set{1}, \set{2}, \crit{1}, \crit{2}, $c_P$ and $c_{NP}$, and for each possible number of protected links $p=|\ensP|$, a lower bound on the number of non-protected links that have any $(\crit{1}, \crit{2})$-resistant network with $p$ protected links (Section~\ref{sec:lower}). Then, we study three important cases, namely when $p$ takes values $0$, $n_2-1$ and $n_1+n_2-1$, and present for each of them sufficient conditions under which the lower bounds are attained (Section~\ref{sec:part}). Based on this study, we can obtain the main theoretical results of this paper, which include the optimal strategy for the designer, i.e., a $(\crit{1}, \crit{2})$-resistant IoT network with the minimal cost, as well as the robust  optimal strategy, and constructive methods of an optimal IoT network (Section~\ref{sec:theo}).

\subsection{A Lower Bound on the Number of (Non-Protected) Links}
\label{sec:lower}

Recall that the system parameters are \set{1}, \set{2}, \crit{1}, \crit{2}, $c_P$ and $c_{NP}$ (corresponding to the set of nodes of criticality level $1$ and $2$, the values of criticality, and the unitary cost of creating protected and non-protected links). 
We first address the question of a lower bound on the cost for the designer with an additional constraint on the number of protected links $p$ in the network. Since the cost is linear with the number of non-protected links, it amounts to finding a lower bound on the number of non-protected links that are required in any $(k_1, k_2)$-resistant network with $p$ protected links.

Let $\tilde{s}^D_p$ be a $(k_1, k_2)$-resistant network containing $p$ protected links. Then, we have the following proposition on the lower bound $|\ensNP|$.

\begin{proposition}[Lower bound on $|\ensNP|$]\label{prop:lower}
The number of non-protected links of $\tilde{s}^D_p$ is at least of
\begin{itemize}
\item[(i)] $\displaystyle \frac{n_1 (k_1+1) + (n_2-p)(k_2+1)}{2},$\hfill if $0\leq p \leq n_2-2$,
\item[(ii)] $\displaystyle \frac{(n-p) (k_1+1)}{2},$\hfill  if $n_2-1 \leq p \leq n_1+n_2-2$,
\item[(iii)] $0,$\hfill  if $ p=n_1+n_2-1$.
\end{itemize}
Note that $p$ takes integer values in each regime. The results are further illustrated in Fig.~\ref{fig:illus}.
\end{proposition}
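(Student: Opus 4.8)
The plan is to prove all three bounds with a single degree-counting argument carried out on the graph obtained by \emph{contracting the protected links}. Write $V=\set{1}\cup\set{2}$ and let $P_1,\dots,P_m$ be the connected components of the subgraph $(V,\ensP)$; let $b_i$ be the number of type-$2$ nodes lying in $P_i$, so that $\sum_i b_i=n_2$. Since the subgraph induced on $P_i$ by the protected links is connected, it has at least $|P_i|-1$ protected links, and I would record two elementary consequences: summing over all components, $p\ge\sum_i(|P_i|-1)=n-m$, i.e. $m\ge n-p$; and summing only over the $t$ components that contain at least one type-$2$ node, $p\ge\sum_{i:\,b_i\ge1}(b_i-1)=n_2-t$, i.e. $t\ge n_2-p$.

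Next, contract each $P_i$ to a super-node and let $\delta_i$ denote the number of \emph{non-protected} links with exactly one endpoint in $P_i$, so that $2|\ensNP|\ge\sum_i\delta_i$. The key observation is that, by definition of a connected component, no protected link leaves $P_i$; hence deleting those $\delta_i$ non-protected links disconnects $P_i$ entirely from $V\setminus P_i$. Two consequences follow: (1) if $m\ge2$ then $\delta_i\ge k_1+1$ for every $i$, for otherwise the attack $\ensA$ consisting of these $\le k_1$ non-protected links would separate the nonempty set $P_i$ from the nonempty set $V\setminus P_i$, contradicting (a); and (2) if $1\le b_i<n_2$ then $\delta_i\ge k_2+1$, for otherwise an attack of size $\le k_2$ would separate a type-$2$ node in $P_i$ from a type-$2$ node outside, contradicting (b).

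The three regimes then reduce to bookkeeping. Regime (iii) is immediate. In regime (ii), $p\le n-2$ forces $m\ge n-p\ge2$, so (1) applies to all $m$ super-nodes and $2|\ensNP|\ge\sum_i\delta_i\ge m(k_1+1)\ge(n-p)(k_1+1)$. In regime (i), $p\le n_2-2$ forces $m\ge n_1+2\ge3$ and $t\ge n_2-p\ge2$; moreover no single component can contain all $n_2$ type-$2$ nodes, since such a component would already have at least $n_2-1$ protected links, contradicting $p\le n_2-2$. Hence every type-$2$-bearing component has $b_i<n_2$ and so $\delta_i\ge k_2+1$ by (2), while each of the $m-t$ purely type-$1$ components has $\delta_i\ge k_1+1$ by (1); summing,
\begin{align*}
2|\ensNP| &\ge t(k_2+1)+(m-t)(k_1+1)=m(k_1+1)+t(k_2-k_1)\\
&\ge (n-p)(k_1+1)+(n_2-p)(k_2-k_1),
\end{align*}
where the last step uses $k_2-k_1\ge0$ from (A1). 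Substituting $n-p=n_1+n_2-p$ in the right-hand side gives exactly $n_1(k_1+1)+(n_2-p)(k_2+1)$, which is (i).

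I expect the only genuinely delicate point to be pinning down the two counting inequalities $m\ge n-p$ and $t\ge n_2-p$, together with the observation that a single protected component holding all type-$2$ nodes is impossible throughout regime (i); the ``isolate a protected component'' steps are routine once conditions (a)--(b) are read carefully, and assumption (A2) is what guarantees $V\setminus P_i\ne\emptyset$ whenever $m\ge2$.
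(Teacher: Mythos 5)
Your proof is correct and follows essentially the same route as the paper's: contract the protected components, observe that each resulting super-node needs non-protected degree at least $k_1+1$ (or $k_2+1$ when it contains a type-$2$ node and another type-$2$ node lies outside), and count components via $m\ge n-p$ and $t\ge n_2-p$. The only cosmetic difference is that you replace the paper's explicit linear program over the counts $(\nu_0,\nu_1,\nu_2)$ with a direct monotone substitution in $m(k_1+1)+t(k_2-k_1)$, which if anything makes the minimization step tighter and more transparent.
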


Before proving Proposition \ref{prop:lower}, we first present the notion of network contraction in the following.

\textit{\textbf{Network Contraction:}} 
Let $g = (\set{1} \cup \set{2}, \ensP \cup \ensNP)$ be a network. Given a link $\link \in \ensP$, the network denoted by $g\oslash \link $ refers to the one obtained by contracting the link \link; i.e., by merging the two
nodes $i$ and $j$ into a single node $\{i,j\}$ (supernode). Note that any node $a$ is
adjacent to the (new) node $\{i,j\}$ in $g\oslash  \link$ if and only
if $a$ is adjacent to $i$ or $j$ in the original network $g$. 
In other words, all links, other than those incident to neither $i$ nor $j$, are links of $g\oslash  \link$ if and only if they are links of $g$.
Then $\hat{g}$, the contraction of network $g$, is the (uniquely defined) network obtained from $g$ by sequences of link contractions for all links in $\ensP$ \cite{Christophe}. 

For clarity, we illustrate the contraction of a network $g$ in Fig.~%
\ref{Figuredeb}. This example consists of $5$ nodes and $2$ protected links (represented in bold lines between nodes $1$ and $2$ and between nodes $3$ and $4$). The link $(1,2)$ is contracted and thus both nodes $1$ and $2$ in $g$ are merged into a single node denoted by $\{1,2\}$ in $\hat{g}$. Similarly the link $(3,4)$ is contracted. The resulting network thus consists of node $5$ and supernodes $\{1,2\}$ and $\{3,4\}$. Since $g$ contains a link between nodes $5$ and $1$ in $g$, then nodes $5$ and $\{1,2\}$ are connected through a link in network $\hat{g}$. Similarly, since nodes $1$ and $3$ are adjacent in $g$, then supernodes $\{1,2\}$ and $\{3,4\}$ are adjacent in network $\hat{g}$. 

\begin{figure}[t]
\centering{
\subfloat[Network $g = (N, \ensP \cup \ensNP)$]{
\phantom{aa}
\begin{tikzpicture}[x=1.1 cm,y=0.7 cm]
\GraphInit[vstyle=Normal]
\Vertex{5}
\NOEA(5){1} \EA(1){2} \SOEA(5){3} \EA(3){4}
\Edges(5,1,3) 
\tikzset{EdgeStyle/.style = protege}
\Edge(1)(2) \Edge(3)(4)
\end{tikzpicture}
\phantom{aa}
} \hspace{2em}
\subfloat[Contraction network $\hat{g}$]{
\begin{tikzpicture}[x=1.1 cm,y=0.7 cm]
\SetVertexMath
\GraphInit[vstyle=Normal]
\Vertex{5}
\presetkeys[GR]{vertex}{empty=false}{}
\tikzset{VertexStyle/.style = multi}
\NOEA[L={\{1,2\}}](5){3}
\SOEA[L={\{3,4\}}](5){pp}   
\Edge(5)(3)
\Edge(3)(pp)
\end{tikzpicture}\phantom{aaa}
}
\caption{Illustration of network contraction. The protected links $\lien{1}{2}$ and $\lien{3}{4}$ in network $g$ are contracted in network $\hat{g}$.}\label{Figuredeb}}
\end{figure}
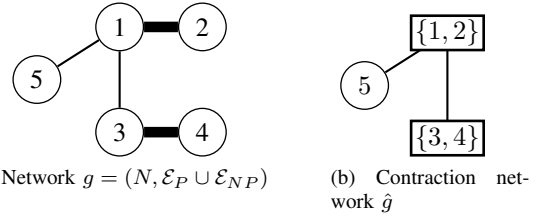

Based on network contraction, we present the proof of Proposition \ref{prop:lower} as follows.

\begin{proof}
Consider an IoT network $g$ including $p$ protected links, and $\hat{g}$ as its contraction. Let
\begin{itemize}
\item[(1)] $\nu_1$ be the number of nodes of type $1$ in $\hat{g}$ (and supernodes containing only nodes of type $1$),
\item[(2)] $\nu_2$ be the number of nodes of type $2$ in $\hat{g}$ (and supernodes containing only nodes of type $2$),
\item[(3)] $\nu_0$ be the number of supernodes in $\hat{g}$ that contains nodes of both type $1$ and $2$.
\end{itemize}

Note that if $\nu_1+\nu_2+\nu_0 = 1$, (i.e., if there is a unique supernode containing all nodes of the network), then no non-protected link is needed to ensure any level of $(\crit{1}, \crit{2})$-resistancy. 
Otherwise, for the IoT network to be $(k_1,k_2)$-resistant, each element of $\nu_1$, $\nu_2$ and $\nu_0$ must have a degree of (at least) $k_1+1$. 
Further, if there exist more than one element not in $\nu_1$; i.e., if $\nu_0+\nu_2 \geq 2$, then each of them should have a degree of (at least) $k_2+1$.

Thus, a lower bound on the number of non-protected links in $\tilde{s}^D_p$ is
\begin{equation*}
\Phi = \left\{\begin{array}{@{}l@{}}
\displaystyle \frac{\nu_1 (k_1+1) + (\nu_0+\nu_2)(k_2+1)}{2}, \quad \text{ if } \nu_2+\nu_0 > 1,\\
0, \hfill \text{if } \nu_1+\nu_2+\nu_0 = 1,\\
\displaystyle \frac{(\nu_1+1) (k_1+1)}{2}, \hfill \text{if } \nu_1 \geq 1 \text{ and }\nu_2+\nu_0 = 1.
\end{array}\right.
\end{equation*}

Next, we focus on the study of parameters $\nu_0$, $\nu_1$ and $\nu_2$. If no protected link is used, i.e., $p=0$, then $\nu_1 = n_1$, $\nu_2 = n_2$ and $\nu_0 = 0$ and $\nu_0+\nu_1+\nu_2 = n_1+n_2 = n$. Adding any protection allows to decrease the total number of elements $\nu_1+\nu_2+\nu_0$ by $1$ (or to remain constant if the link induce a loop in a protected component of $g$). Thus $\nu_0 + \nu_1 + \nu_2 \geq n-p$. Similarly, for each subnetwork, we have $\nu_0 + \nu_1 \geq n_1-p$ and $\nu_0 + \nu_2 \geq n_2-p$.
Further, the number of elements of $\nu_1$ and $\nu_2$ are upper bounded by  the number of nodes of type $1$ $n_1$ and type 2 $n_2$, respectively, i.e., $\nu_1 \leq n_1$ and $\nu_2 \leq n_2$. Finally, since $n_1 \geq 1$ then $\nu_1+\nu_0 \geq 1$, and since $n_2 \geq 1$ then $\nu_2+\nu_0 \geq 1$. 
Thus, for any $p$, a lower bound on the number of non-protected links in $\tilde{s}^D_p$ can be obtained by solving the following optimization problem:
\begin{equation} \label{eq:LP}
\begin{split}
\min_{\substack{\nu_1, \nu_2, \nu_0}}&\quad \Phi\\
\mathrm{s.t.}\quad & \nu_0 + \nu_1 + \nu_2 \geq n-p, \\
& \nu_0 + \nu_1 \geq n_1-p,\ \nu_0 + \nu_2 \geq n_2-p,\\ 
&\nu_1 \leq n_1, \; \nu_2 \leq n_2, \\
&\nu_1+\nu_0 \geq 1, \;\; \nu_2+\nu_0 \geq 1.
\end{split}
\end{equation}
To solve this optimization problem, we consider three cases.

\textit{Case 1:} First, assume that $p<n_2-1$. From $\nu_0 + \nu_1 + \nu_2 \geq n-p$, we obtain that $\nu_0+\nu_2 >1$. Thus, \eqref{eq:LP} reduces to $\min_{\substack{\nu_1, \nu_2, \nu_0}}\ \frac{\nu_1 (k_1+1) + (\nu_0+\nu_2)(k_2+1)}{2}$ with the same constraints as in \eqref{eq:LP} except $\nu_0+\nu_2 >1$.


Since $k_2 \geq k_1$, then the minimum of the objective is obtained when $\nu_0+\nu_2$ is minimized, i.e., when all protections involve nodes of type $2$. Then, $\nu_0+\nu_2 = n_2-p$. Thus, the lower bound is equal to $\frac{n_1 (k_1+1) + (n_2-p)(k_2+1)}{2}$. This result is illustrated by the line joining points A and B in Fig.~\ref{fig:illus}.

\textit{Case 2:} Assume that $n_2-1 \leq p \leq n_1+n_2-2$. Then $n-p \leq n_1+1$. Therefore, for a given $p$, i.e., for a given minimal value of $\nu_0+\nu_1+\nu_2$, we can have either $\nu_0+\nu_2>1$ or $\nu_0+\nu_2=1$. Then, the lower bound of the number of non-protected links is $\min\left\{ \frac{n_1(k_1+1)+(n_2-p)(k_2+1)}{2}, \frac{(n-p) (k_1+1)}{2}\right\}$. Recall that $k_2 \geq k_1$, and therefore the lower bound achieves at $ \frac{(n-p) (k_1+1)}{2} $. This observation is illustrated by the line in Fig.~\ref{fig:illus} joining points C and D.

\textit{Case 3:} Finally, when $p=n-1$,  $\nu_0+\nu_1+\nu_2 = 1$, and thus no non-protected link is needed, which is represented by point E in Fig.~\ref{fig:illus}.
\end{proof}

Based on Proposition \ref{prop:lower}, we further comment on the locations where protected and non-protected links are placed in the two-layer IoT networks.
\begin{corollary}\label{coro1}
When $0\leq p \leq n_2-2$, the protected links purely exist in subnetwork 2. When $n_2-1 \leq p \leq n_1+n_2-2$, subnetwork 2 only contains protected links, and non-protected links appear in subnetwork 1 or between two layers. When $p= n_1+n_2-1$, then all nodes in the two-layer IoT network are connected with protected links.
\end{corollary}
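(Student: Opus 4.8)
The plan is to read the corollary directly off the structure of the minimizers of the linear program~\eqref{eq:LP} solved inside the proof of Proposition~\ref{prop:lower}. The one lemma I would establish first is that any $(k_1,k_2)$-resistant network $g$ with $p$ protected links that \emph{attains} the bound of Proposition~\ref{prop:lower} is ``tight'' in two senses: its contraction triple $(\nu_0,\nu_1,\nu_2)$ is an optimal point of~\eqref{eq:LP}, and, moreover, no link of $g$ is wasted --- no protected link induces a loop in a protected component, and no non-protected link has both endpoints inside a single protected component. The protected-link part follows because a wasted protected link makes the effective number of useful protections $p-1$, and the bound of Proposition~\ref{prop:lower} is strictly decreasing in $p$ on each regime; the non-protected-link part follows from the identity $2|E(\hat g)| = \sum_{v \in \hat g}\deg_{\hat g}(v)$ together with $|E(\hat g)| \le |\ensNP|$ (a non-protected link inside a protected component disappears under contraction) and the degree lower bounds ($k_1+1$, resp.\ $k_2+1$) that $(k_1,k_2)$-resistancy imposes on the elements of $\hat g$: any wasted non-protected link would make $|\ensNP|$ strictly exceed $\Phi$ evaluated at the triple, hence strictly exceed the Proposition bound.

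With this in hand, each regime is a short translation. For $0\le p\le n_2-2$, Case~1 of the proof shows the bound is attained only when $\nu_0+\nu_2 = n_2-p$; with $\nu_1\le n_1$ and $\nu_0+\nu_1+\nu_2\ge n-p$ this forces $\nu_1 = n_1$, and counting type-$1$ vertices ($\nu_1+\nu_0\le n_1$) then forces $\nu_0 = 0$, so all $n_1$ type-$1$ nodes survive as singletons in $\hat g$. A protected link incident to a type-$1$ node would, after all contractions, either merge two type-$1$ singletons ($\nu_1<n_1$) or produce a supernode containing both types ($\nu_0\ge 1$), both contradicting the forced triple; hence every protected link joins two nodes of subnetwork~$2$. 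For $n_2-1\le p\le n_1+n_2-2$, Case~2 shows the bound is attained only when $\nu_0+\nu_2 = 1$, i.e.\ all $n_2$ type-$2$ nodes lie in a single (super)node of $\hat g$; a non-protected link with both endpoints in subnetwork~$2$ would then be wasted (both endpoints in the same protected component), which the lemma forbids, so subnetwork~$2$ carries only protected links, while every non-protected link of $g$ is incident to a type-$1$ element or the unique type-$2$ element and therefore lies in subnetwork~$1$ or between the two layers. For $p=n_1+n_2-1$, the bound-attaining triple must satisfy $\nu_0+\nu_1+\nu_2=1$ (part~(iii) of Proposition~\ref{prop:lower}), so the $p=n-1$ protected links form a connected spanning subgraph --- necessarily a spanning tree --- on all $n$ nodes, and no non-protected link is needed.

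The step I expect to be the main obstacle is making the ``only when'' (forced) claims airtight under partial contractions: in Case~2 a protected link counted as ``between layers'' may have one endpoint already absorbed into the growing type-$2$ supernode, so one must argue that a triple with $\nu_0+\nu_2\ge 2$ (resp., in Case~1, with $\nu_0\ge 1$) pushes $|\ensNP|$ \emph{strictly} above the Proposition~\ref{prop:lower} value rather than merely being consistent with a larger value --- and the tightness lemma above is exactly what supplies this. A secondary subtlety to flag is the degenerate case $k_1 = k_2$, where the Case~1 objective no longer strictly prefers placing protections among type-$2$ nodes, so protected links may equivalently sit in subnetwork~$1$; there the first statement of the corollary should be read for the canonical optimal network produced by the construction of Section~\ref{sec:theo} (equivalently, under the strict inequality $k_1<k_2$). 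Apart from these two points, the argument is a direct re-reading of Cases~1--3 of the proof of Proposition~\ref{prop:lower}.
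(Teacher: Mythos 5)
Your argument is correct and follows the same route the paper intends: the paper states Corollary~\ref{coro1} with no separate proof, treating it as a direct reading of which minimizers of the linear program~\eqref{eq:LP} are forced in each of the three cases of the proof of Proposition~\ref{prop:lower}, and your tightness lemma plus the forced-triple computations simply make that reading rigorous. Your caveat about the degenerate case $k_1=k_2$ is well taken --- and in fact it also applies to the middle regime $n_2-1\le p\le n_1+n_2-2$, where $\nu_0+\nu_2=1$ is only forced when $k_2>k_1$ --- but the paper silently assumes the generic reading, so this does not constitute a gap relative to what the authors claim.
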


Corollary \ref{coro1} has a natural interpretation that the protected link resources are prior to be allocated to a subnetwork facing higher cyber threats, i.e., subnetwork 2 in our setting. 

\subsection{Networks with Special Values of $p$ Protected Links}
\label{sec:part}

In the previous Section \ref{sec:lower}, we have studied for each potential number of protected links $p$, a lower bound $m(p)$ on the minimum number of non-protected links  for an IoT network with sets of nodes $\set{1}$ and \set{2} being $(k_1, k_2)$-resistant. Then, the cost associated with such networks is $$C(p,m(p)) = p c_P + m(p) c_{NP},$$
where $C:\mathbb{N}\times \mathbb{N}\rightarrow \mathbb{R}_+$.
Since the goal of the designer is to minimize its cost, we need to investigate the value of $p$ minimizing such function $C(p,m(p))$. 

In Fig.~\ref{fig:illus}, we note that the plot of a network of equal cost (\textit{iso-cost}) $K$ is a line of equation $\frac{K-p c_P}{c_{NP}}$. It is thus a line of (negative) slope $c_P/c_{NP}$ that crosses the $y$-axis at point $K/c_{NP}$. Recall also that the graph that shows $m(p)$ as a function of $p$ is on the upper-right quadrant of its lower bound. Thus, the optimal value of $p$ corresponds to the point where an iso-cost line meets the graph $m(p)$ for the minimal value $K$. From the shape of the lower bound drawn in Fig.~\ref{fig:illus}, the points A, C and E are selected candidates leading to the optimal network construction cost. We thus investigate in the following the condition under which the lower bounds are reached at these critical points as well as the corresponding configuration of the optimal two-layer IoT networks.

\textbf{\textit{Remark:}} Denote by $s^D_p$ a (\crit{1}, \crit{2})-resistant IoT network with $p$ protected links and the \textit{minimum} number of non-protected links.

Before presenting the result, we first present the definition of Harary network in the following. Recall that for a network containing $n$ nodes being resistant to $k$ link attacks, one necessary condition is that each node should have a degree of at least $k+1$, yielding the total number of links more than $\ceil*{\frac{(k+1)n}{2}}$. Here, $\ceil*{\cdot}$ denotes the ceiling operator. Harary network below can achieve this bound.
\begin{definition}[Harary Network \cite{harary}]
In a network containing $n$ nodes, Harary network is the optimal design that uses the minimum number of links equaling $\ceil*{\frac{(k+1)n}{2}}$ for the network still being connected after removing any $k$ links. 
\end{definition}
The constructive method of general Harary network can be described with cycles as follows. It first creates the links between node $i$ and node $j$ such that $(|i-j|\mod n)=1$, and then $(|i-j|\mod n)=2$, etc. When the number of nodes is odd, then the last cycle of link creation is slightly different since $\frac{(k+1)n}{2}$ is not an integer. However, the bound $\ceil*{\frac{(k+1)n}{2}}$ can be still be achieved.  For clarity, we illustrate three cases in Fig. \ref{Harary_figs} with $n=5,7$ under different security levels $k=2,3$. Since Harary network achieves the bound $\ceil*{\frac{(k+1)n}{2}}$, its computational cost of the construction is linear in both the number of nodes $n$ and the security level $k$.

\begin{figure}[!t]
\begin{centering}
\includegraphics[width=1\columnwidth]{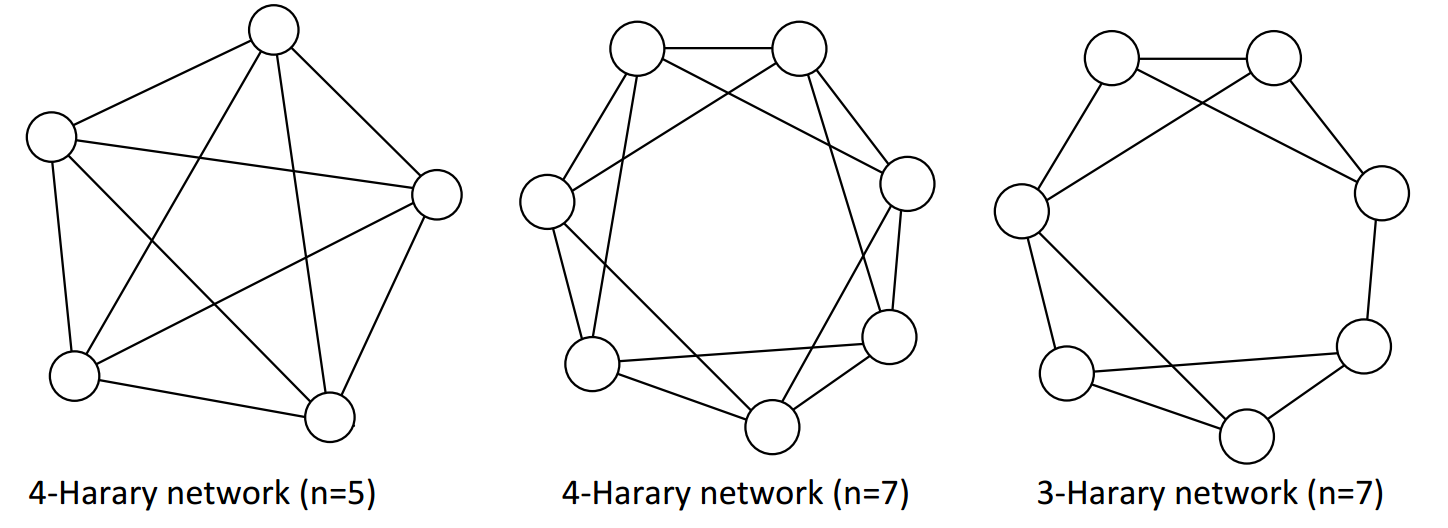} 
\par\end{centering}
\caption{\label{Harary_figs}
Illustration of Harary networks with different number of nodes and security levels.}
\end{figure}
 
Then, we obtain the following result.

\begin{proposition}\label{prop:1}
For the number of protected links $p$ taking values of $n-1,\ n_2-1$, and $0$, we successively have:
\begin{itemize}
\item[(i)] each $s^D_{n-1}$ contains exactly $0$ non-protected link.
\item[(ii)] each $s^D_{n_2-1}$ contains exactly $\ceil*{\frac{(n_1+1)(k_1+1)}{2}}$ non-protected links if and only if $k_1+1 \leq n_1$. 
\item[(iii)] if we have the following asumptions: (i) $k_1 \mod 2 = 1$, where $\mod$ denotes the modulus operator, (ii) $n_2 > k_2-k_1$ and (iii) $n_2\frac{k_1+1}{2} \leq n_1$, then each $s^D_{0}$ contains exactly $\displaystyle \ceil*{\frac{n_1 (k_1+1) + n_2(k_2+1)}{2}}$ non-protected links.
\end{itemize}
\end{proposition}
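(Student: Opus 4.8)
The plan is to prove each item by matching the lower bound of Proposition~\ref{prop:lower} (together with the integrality of the number of links): for $p\in\{n-1,\;n_2-1,\;0\}$ that bound is $0$, $\ceil*{\frac{(n_1+1)(k_1+1)}{2}}$ and $\ceil*{\frac{n_1(k_1+1)+n_2(k_2+1)}{2}}$ respectively, so it suffices, for the ``exactly'' claims, to construct a $(k_1,k_2)$-resistant network with the prescribed number $p$ of protected links achieving exactly these counts, and, for the converse in~(ii), to show the count cannot be met unless $k_1+1\le n_1$. Item~(i) is immediate: pick any spanning tree on the $n$ vertices and make all its $n-1$ edges protected; since protected links never fail, the resulting network is $(k_1,k_2)$-resistant for every pair $(k_1,k_2)$ and uses $0$ non-protected links.

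For the sufficiency half of~(ii), assume $k_1+1\le n_1$. Use the $n_2-1$ protected links to build a spanning tree on the $n_2$ type-$2$ vertices, so that in the contraction they all collapse into a single super-vertex $v^\star$; then, on the $n_1+1$ vertices made of the $n_1$ type-$1$ vertices together with $v^\star$, place a Harary network resistant to $k_1$ link removals, realizing each of its edges at $v^\star$ as a genuine link from the relevant type-$1$ vertex to some fixed type-$2$ vertex. Such a Harary network exists precisely because $k_1+1\le n_1=(n_1+1)-1$, and it has $\ceil*{\frac{(n_1+1)(k_1+1)}{2}}$ non-protected edges. Condition~(b) is trivial since the type-$2$ vertices stay connected through the protected tree, and condition~(a) holds since deleting any $k_1$ non-protected links leaves the Harary contraction connected and re-expanding the protected tree preserves connectivity. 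For the converse, one returns to the linear program in the proof of Proposition~\ref{prop:lower}: reaching the bound with $p=n_2-1$ forces $\nu_1=n_1$ and $\nu_0+\nu_2=1$ (any other feasible triple strictly increases the objective, using $k_2\ge k_1$ and $\nu_1\le n_1$), i.e.\ the type-$1$ vertices are left untouched and all type-$2$ vertices collapse to one super-vertex carrying no protected link; this is exactly where $k_1+1\le n_1$ enters, and carrying out the analysis that no alternative placement of the $n_2-1$ protected links can do better (including cross-layer placements and redundant links inside a protected component) is the delicate step.

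For~(iii) the construction is a ``two-layer Harary network'' built from two pieces. First, condition~(iii), $n_2\frac{k_1+1}{2}\le n_1$, lets us arrange the $n$ vertices on a cycle so that consecutive type-$2$ vertices are at cyclic distance at least $\frac{k_1+1}{2}+1$, since the $n_2$ gaps then sum to at least $\frac{n_2(k_1+1)}{2}+n_2\le n_1+n_2=n$. On this cyclic order take $G_1$ to be the Harary network resistant to $k_1$ link removals; by condition~(i) the value $k_1+1$ is even, so $G_1$ is $(k_1+1)$-regular, $(k_1+1)$-edge-connected, has exactly $\frac{n(k_1+1)}{2}$ edges, and by the spacing contains no edge between two type-$2$ vertices. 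Take $G_2$ to be a Harary network on the $n_2$ type-$2$ vertices resistant to $k_2-k_1-1$ link removals (the empty graph if $k_2=k_1$); it exists by condition~(ii), $n_2>k_2-k_1$, is $(k_2-k_1)$-edge-connected, and has $\ceil*{\frac{n_2(k_2-k_1)}{2}}$ edges. Since $G_1$ has no edge between two type-$2$ vertices, $G_1$ and $G_2$ are edge-disjoint; set $\ensP=\emptyset$ and $\ensNP=E(G_1)\cup E(G_2)$. A short parity computation (again using that $k_1+1$ is even) yields $|\ensNP|=\frac{n(k_1+1)}{2}+\ceil*{\frac{n_2(k_2-k_1)}{2}}=\ceil*{\frac{n_1(k_1+1)+n_2(k_2+1)}{2}}$, matching Proposition~\ref{prop:lower}(i). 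Resistance follows by splitting the attacker's budget: if an attack of size at most $k_2$ removes at most $k_2-k_1-1$ edges of $G_2$, then $G_2$ by itself keeps all type-$2$ vertices connected, so~(b) holds; otherwise it removes at most $k_1$ edges of $G_1$, so $G_1$ by itself keeps all $n$ vertices connected, so~(a) and~(b) both hold; and any attack of size at most $k_1$ always falls into the second case, so~(a) holds in general.

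I expect the converse direction of~(ii) to be the main obstacle: the sufficiency statements and item~(iii) reduce cleanly to standard facts about Harary networks plus the bookkeeping above, whereas showing that the value $\ceil*{\frac{(n_1+1)(k_1+1)}{2}}$ is unattainable once $k_1+1>n_1$ calls for a complete case analysis of how the $n_2-1$ protected links may be spread among and across the two layers, and a proof that every such configuration forces a contracted network that either fails the needed edge-connectivity or carries strictly more non-protected links than claimed.
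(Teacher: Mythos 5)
Your items (i), (iii), and the sufficiency half of (ii) are correct and follow essentially the same route as the paper: a protected spanning tree for $p=n-1$; a protected tree on $\set{2}$ contracted to a super-vertex plus a $(k_1+1)$-Harary network on $n_1+1$ vertices for $p=n_2-1$; and for $p=0$ the interleaved cyclic ordering (using $n_2\frac{k_1+1}{2}\le n_1$ to keep type-$2$ vertices far apart so the two Harary layers are edge-disjoint), with your parity bookkeeping and the budget-splitting argument for resistance being, if anything, more explicit than the paper's.

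The genuine gap is the ``only if'' direction of (ii), which you explicitly defer as ``the delicate step'' and never carry out. It does not in fact require a case analysis over placements of the protected links: the paper closes it in a few lines of degree counting. Suppose $k_1+1>n_1$ and some $(k_1,k_2)$-resistant network with $p=n_2-1$ protected links attains $\ceil*{\frac{(n_1+1)(k_1+1)}{2}}$ non-protected links. As you note, the linear program forces $\nu_1=n_1$ and $\nu_0+\nu_2=1$, so all $n_2-1$ protected links are spent forming a spanning tree of $\set{2}$ and no type-$1$ vertex touches a protected link. Hence every type-$1$ vertex needs non-protected degree at least $k_1+1$, of which at most $n_1-1$ neighbors can be of type $1$, so it has at least $k_1+2-n_1\ge 2$ cross-layer links. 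Writing $a$ for the number of non-protected links inside $\set{1}$ and $b$ for the cross-layer ones, summing degrees over $\set{1}$ gives $2a+b\ge n_1(k_1+1)$ with $a\le\frac{n_1(n_1-1)}{2}$, hence $|\ensNP|\ge a+b\ge n_1(k_1+1)-\frac{n_1(n_1-1)}{2}$, which exceeds the claimed bound by $\frac{n_1-1}{2}(k_1+1-n_1)>0$, a contradiction. Supplying this count is all that is missing; also note that $k_1+1\le n_1$ enters only here, not (as you suggest) in forcing $\nu_1=n_1$.
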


\begin{proof}
We successively prove the three items in the proposition in the following.

(i) Note that $s^D_{n-1}$ contains exactly $p=n-1$ protected links. It is thus possible to construct a tree network among the set $\set{1} \cup \set{2}$ of nodes  that consists of only  protected links. Thus, no non-protected link is required, and the lower bound (point E in Fig. \ref{fig:illus}) can be reached.

(ii) Suppose that $p=n_2-1$. If $k_1+1 \leq n_1$, we can construct any tree protected network on the nodes of \set{2}. Further, construct a $(k_1+1)$-Harary network on the nodes of $\set{1}\cup \{n_1+1\}$, that is the nodes of type 1 and one node of type 2. Such construction is possible since $k_1+2 \leq n_1+1$. The total number of non-protected links is then exactly $\ceil*{\frac{(n_1+1)(k_1+1)}{2}}$ (point C in Fig. \ref{fig:illus}). Therefore, each node in $\set{1}\cup \{n_1+1\}$ is connected to $k_1+1$ other nodes, and the IoT network cannot be disconnected after removing $k_1$ non-protected links. In addition, the subnetwork 2 is resistant to any number of attack since it is constructed using all protected links. Note that the constructed Harary network here is optimal, in the sense that its configuration uses the least number of links for the IoT network being (\crit{1}, \crit{2})-resistant.

Next, if $k_1+1> n_1$, then suppose that a network $g$ achieves the lower bound $\ceil*{\frac{(n_1+1)(k_1+1)}{2}}$. Consider its associated contracted network $\hat{g}$. Since $g$ contains $n_2-1$ protected links, then $\hat{g}$ is such that $\nu_0+\nu_1+\nu_2 \geq n_1+1$. From the shape of the lower bound $\Phi$ in the proof of Proposition  \ref{prop:lower}, then necessarily $\nu_0+\nu_2 = 1$ and $\nu_1=n_1$. Thus, all nodes in $\set{2}$ need to be connected together by protected links. Since $|\set{2}| = n_2$, then it requires at least $n_2-1$ protected links, which equals $p$. Thus, there cannot be any protected link involving nodes in set \set{1}. In addition, each node in \set{1} needs to be connected to at least $k_1+1$ other nodes in the IoT network. Since $k_1+1> n_1$, then every node in $\set{1}$ should connect to at least $(k_1+1)-(n_1-1) \geq 2$ number of nodes in $\set{2}$.  Recall that in a complete network of $m$ nodes, each node has a degree of $m-1$, and the total number of links is $\frac{m(m-1)}{2}$. Hence, our IoT network admits a completed graph in \set{1} with some extra $n_1((k_1+1)-(n_1-1))$ non-protected links between two subnetworks, and in total at least $\frac{n_1(n_1-1)}{2} + n_1((k_1+1)-(n_1-1)) = n_1(k_1+1)- \frac{n_1(n_1-1)}{2}$ non-protected links. Then, comparing with the lower bound, the extra number of links required is
$
n_1(k_1+1)- \frac{n_1(n_1-1)}{2}-\frac{(n_1+1)(k_1+1)}{2}
= \frac{n_1-1}{2} (k_1+1-n_1) >0.
$
Thus, $s^D_{n_2-1}$ does not achieve the lower bound (point C in Fig. \ref{fig:illus}) when $k_1+1>n_1$.

(iii) Finally, suppose that $p=0$. We renumber the nodes in the network according to the following sequence:
$
1, 2, \cdots, \frac{k_1+1}{2}, {{n_2}}, 
\frac{k_1+1}{2}+1, \cdots, k_1+1, {{n_2+1}}, 
k_1+2, \cdots, 3\frac{k_1+1}{2}, {{n_2+2}}, \cdots.$ Intuitively, we interpose one node in \set{2} after every $\frac{k_1+1}{2}$ nodes in \set{1}.
Then, we first build a $(k_1+1)$-Harary network among all the nodes in \set{1} and \set{2}. Note that since $n_2\frac{k_2+1}{2} \leq n_1$, then the last $\frac{k_1+1}{2}$ indices of the sequence only contain nodes of type $1$. Thus, by construction, there are no links between any two nodes in \set{2}. Then, we can further  construct a $(k_2-k_1)$-Harary network on the nodes in \set{2}, which is possible since $n_2 > k_2-k_1$. Thus, the constructed IoT network is (\crit{1}, \crit{2})-resistant, and it is also optimal since it uses the minimum number of non-protected links.
\end{proof}

Proposition \ref{prop:1} and Fig.~\ref{fig:illus} indicate  that depending on the system parameters ($k_1,k_2,n_1,n_2$) and for a given budget, the optimal IoT network can achieve at either point A, C or E with $p=0,n_2-1,n-1$ protected links, respectively. Notice that when $k_1+1>n_1$, $s^D_{n_2-1}$ is not optimal at point C and the lower bound on the number of non-protected links is not attained. Instead, in this case, $s^D_{n_2-1}$ requires $\frac{n_1(2k_1-n_1+3)}{2}$ non-protected links in which $n_1(k_1-n_1+2)$ are allocated between two subnetworks, introducing protection redundancy for nodes in \set{2}. For the IoT network containing 0 protected link, it reaches the lower bound (point A) if we can construct a $(k_1+1)$-Harary network for all nodes and an additional $(k_2-k_1)$-Harary network for nodes only in \set{2}. As mentioned before, the Harary network admits an optimal configuration with the maximum connectivity given a number of links \cite{harary}.

\subsection{Optimal Strategy and Construction of IoT Networks}
\label{sec:theo}
We investigate the optimal strategy and the corresponding construction for the IoT network designer in this section. 
\subsubsection{Optimal Strategy}
Before presenting the main result, we comment on the scenarios that we aim to study regarding the IoT networks.
\begin{itemize}
\item[(1)] First, the number of nodes is relatively large comparing with the link failure risks, i.e., $n_1 \geq k_1+1$ and $n_2 \geq k_2-k_1+1$. Indeed, these two conditions indicate that the designer can create a secure two-layer IoT network solely using non-protected links.
\item[(2)] We further have the condition $n_2\frac{k_1+1}{2} \leq n_1$, indicating that the type $2$ nodes with higher criticality levels in \set{2} constitute a relatively small portion in the IoT network comparing with these in \set{1}. This condition also aligns with the practice that the attacker has preferences on the nodes to compromise in the IoT which generally only contain a small subset of the entire network.
\item[(3)] Finally, we have constraints $k_1 \mod 2 = 1$ and $n_2 (k_2+1) \mod 2 = 0$ which are only used to simplify the presentation of the paper (whether the number of nodes and attacks is odd or even). However, they do not affect the results significantly. Note that different cases corresponding to $k_1 \mod 2 = 0$ or $n_2 (k_2+1) \mod 2 = 1$ can be studied in a similar fashion as in our current context. The only difference is that for certain system parameters, $s^D_{0}$ is not an optimal strategy comparing with $s^D_{n_2-1}$  by following a similar analysis in~\cite{Christophe}.
\end{itemize} 

Therefore, based on the above conditions, the scenarios that we analyze are quite general and conform with the situations in the adversarial IoT networks. Based on Proposition~\ref{prop:1}, we then obtain the following result on the optimal design of secure two-layer IoT networks. Note that the solution in Proposition 3 is optimal to the original optimization problem presented in Section II under the considered scenarios.

\begin{proposition}\label{prop:theo}
Under the conditions that $n_1 \geq k_1+1$, $n_2 \geq k_2-k_1+1$, $n_2\frac{k_1+1}{2} \leq n_1$, $k_1 \mod 2 = 1$ and $n_2 (k_2+1) \mod 2 = 0$, we have the following results:
\begin{itemize}
\item[I)] \textit{Regime I:} if $1+k_1-n(k_2-k_1)\leq 0$, then:
\begin{itemize}
\item[(1)] if $2 \frac{c_P}{c_{NP}} \geq k_2+1+\frac{k_2-k_1}{n_2-1}$, then $s^D_0$ are optimal strategies.
\item[(2)] if $k_1+1+\frac{k_1+1}{n_1} \leq   2 \frac{c_P}{c_{NP}} < k_2+1+\frac{k_2-k_1}{n_2-1}$, then $s^D_{n_2-1}$ are optimal strategies.
\item[(3)] if $2 \frac{c_P}{c_{NP}} < k_1+1+\frac{k_1+1}{n_1}$, then $s^D_{n-1}$ are optimal strategies.
\end{itemize}
\item[II)] \textit{Regime II:} if $1+k_1-n(k_2-k_1)> 0$, then:
\begin{itemize}
\item[(1)]  when $k_2-k_1+1\leq n_2<\frac{1+k_1}{1+k_1-n_1(k_2-k_1)}$, the optimal IoT network design strategies are the same as those in regime I.
\item[(2)] otherwise, i.e., $n_2\geq\frac{1+k_1}{1+k_1-n_1(k_2-k_1)}$, we obtain
\begin{itemize}
\item[(i)] if $2\frac{c_P}{c_{NP}}\geq \frac{n_1(k_1+1)+n_2(k_2+1)}{n_1+n_2-1}$, then $s^D_0$ are optimal strategies.
\item[(ii)] if $2\frac{c_P}{c_{NP}}< \frac{n_1(k_1+1)+n_2(k_2+1)}{n_1+n_2-1}$, then $s^D_{n-1}$ are optimal strategies.
\end{itemize}
Thus, $s^D_{n_2-1}$ cannot be optimal in this scenario.
\end{itemize}
\end{itemize}
\end{proposition}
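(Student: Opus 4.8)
The plan is to collapse the designer's optimization into a finite comparison. By the discussion preceding the statement (Section~\ref{sec:part}), the optimal number of protected links $p$ is the one minimizing $C(p,m(p))=p\,c_P+m(p)\,c_{NP}$, where $m(p)$ is the lower bound of Proposition~\ref{prop:lower}. I would (i) show that this minimum is always attained at one of the three values $p=0$, $p=n_2-1$, $p=n-1$; (ii) invoke Proposition~\ref{prop:1} --- whose hypotheses are exactly the standing assumptions here, and whose ceilings vanish since $k_1$ is odd and $n_2(k_2+1)$ is even --- to get that $s^D_0$, $s^D_{n_2-1}$, $s^D_{n-1}$ genuinely attain $C(0,m(0))$, $C(n_2-1,m(n_2-1))$, $C(n-1,m(n-1))$; and (iii), since every $(k_1,k_2)$-resistant network with $p$ protected links costs at least $C(p,m(p))$, conclude that an optimal design must be one of these three, and decide which one by comparing the three costs against the ratio $c_P/c_{NP}$.

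For step~(i), I would treat $C(p,m(p))$ as a function of the integer $p\in\{0,\dots,n-1\}$. By Proposition~\ref{prop:lower} it is affine on $[0,n_2-2]$ with per-unit increment $c_P-\frac{k_2+1}{2}c_{NP}$, affine on $[n_2-1,n-2]$ with increment $c_P-\frac{k_1+1}{2}c_{NP}$, and exhibits two downward ``jumps'', at $n_2-2\!\to\!n_2-1$ and at $n-2\!\to\!n-1$. The key point is that each jump is at least as steeply decreasing as the affine piece immediately to its left (the differences of increments being $\frac{k_2-k_1}{2}c_{NP}\ge 0$ and $\frac{k_1+1}{2}c_{NP}\ge 0$); hence, case-splitting on the sign of each affine increment, the two interior breakpoints $p=n_2-2$ and $p=n-2$ are each dominated by the minimum over their neighbours in $\{0,n_2-1,n-1\}$. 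Since a piecewise-affine function on a finite interval attains its minimum at a breakpoint, $\min_p C(p,m(p))=\min\{C(0,m(0)),C(n_2-1,m(n_2-1)),C(n-1,m(n-1))\}$ --- i.e.\ only points A, C, E of Fig.~\ref{fig:illus} can be optimal.

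For step~(iii), I would expand the three costs ($\frac{n_1(k_1+1)+n_2(k_2+1)}{2}c_{NP}$, $(n_2-1)c_P+\frac{(n_1+1)(k_1+1)}{2}c_{NP}$, $(n-1)c_P$) and view them as affine functions of $c_P/c_{NP}$ with non-decreasing slopes $0,\,n_2-1,\,n-1$ and strictly ordered intercepts. Their three pairwise break-even ratios are exactly the quantities $T_2:=k_2+1+\frac{k_2-k_1}{n_2-1}$ (for $s^D_0$ vs.\ $s^D_{n_2-1}$), $T_1:=k_1+1+\frac{k_1+1}{n_1}$ (for $s^D_{n_2-1}$ vs.\ $s^D_{n-1}$) and $T_3:=\frac{n_1(k_1+1)+n_2(k_2+1)}{n_1+n_2-1}$ (for $s^D_0$ vs.\ $s^D_{n-1}$) that appear in the statement. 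The shape of the lower envelope of the three lines depends only on whether $T_1\le T_2$: if so, the middle line $s^D_{n_2-1}$ appears on the envelope, so (using $T_1\le T_3\le T_2$, which the envelope forces) $s^D_0$, $s^D_{n_2-1}$, $s^D_{n-1}$ are optimal according as $2c_P/c_{NP}$ is $\ge T_2$, lies in $[T_1,T_2)$, or is $<T_1$; if not, $s^D_{n_2-1}$ never appears, the envelope switches from $s^D_{n-1}$ to $s^D_0$ at $T_3$, so $s^D_{n_2-1}$ is never optimal. A one-line cross-multiplication identifies $T_1\ge T_2$ with $n_1n_2(k_2-k_1)\le(k_1+1)(n_2-1)$, i.e.\ --- when $1+k_1-n_1(k_2-k_1)>0$ --- with the condition $n_2\ge\frac{1+k_1}{1+k_1-n_1(k_2-k_1)}$ of part~II)(2). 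Finally, to line this up with the stated regimes I would note that the standing hypotheses force a dichotomy: Regime~I ($1+k_1-n(k_2-k_1)\le 0$) forces $k_2>k_1$, whence $n_1(k_2-k_1)\ge n_1\ge k_1+1$ gives $n_1n_2(k_2-k_1)\ge(k_1+1)n_2>(k_1+1)(n_2-1)$, i.e.\ $T_1<T_2$ --- this is part~I); while Regime~II forces $k_2=k_1$ (since $n=n_1+n_2\ge(k_1+1)+(k_2-k_1+1)$ excludes $k_2>k_1$), which collapses $m(p)$ to the single line $\frac{(n-p)(k_1+1)}{2}$, makes $T_2=k_1+1<T_1$ so that only $T_3=\frac{n(k_1+1)}{n-1}$ is relevant (part~II)(2)), and makes the interval of~II)(1) empty (it reads ``$1\le n_2<1$''), so that assertion holds vacuously.

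The step I expect to be the main obstacle is~(i): $C(\cdot,m(\cdot))$ is \emph{not} convex in $p$ --- its increments decrease, drop sharply at one jump, decrease more gently, then drop sharply again --- so convexity is unavailable, and everything hinges on the elementary but slightly delicate ``each jump is at least as steep as the preceding affine piece'' observation, which in turn uses only $k_2\ge k_1$ and $c_P,c_{NP}\ge 0$. Everything downstream --- the three break-even computations, the lower-envelope-of-three-lines bookkeeping, and the translation between ``$T_1$ versus $T_2$'' and ``$1+k_1-n(k_2-k_1)$ versus $0$'' --- is routine rational arithmetic, with the parity hypotheses entering only to make the link counts furnished by Proposition~\ref{prop:1} match $m(p)$ on the nose.
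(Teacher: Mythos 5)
Your proposal is correct and follows essentially the same route as the paper's proof: restrict attention to the three candidate points A, C, E of Fig.~\ref{fig:illus} given by the piecewise-linear lower bound of Proposition~\ref{prop:lower}, use Proposition~\ref{prop:1} to certify that $s^D_0$, $s^D_{n_2-1}$, $s^D_{n-1}$ attain those bounds, and then compare the iso-cost slope $c_P/c_{NP}$ against the segment slopes (equivalently, your break-even ratios $T_1,T_2,T_3$). You merely make explicit what the paper delegates to the figure and to Corollaries~\ref{coro2}--\ref{coro3} --- the domination of the interior breakpoints $p=n_2-2$ and $p=n-2$, the equivalence of the regime conditions with $T_1\lessgtr T_2$ (where your derivation also corrects the sign of the inequality as printed in the paper's proof), and the vacuity of case II)(1) --- so the added rigor is welcome but not a different method.
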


\begin{proof}
From Proposition~\ref{prop:1} and under the assumptions in the current proposition, $s^D_0$, $s^D_{n_2-1}$ and $s^D_{n-1}$ achieve the lower bounds of the number of links for the network being $(\crit{1}, \crit{2})$-resistant. 
In Fig.~\ref{fig:illus}, note that the slope of the line between points A and C is $\frac{1}{2} (k_2+1+\frac{k_2-k_1}{n_2-1} )$, and between points C and E is $\frac{1}{2} ( k_1+1+\frac{k_1+1}{n_1} )$, where we quantify the slopes in their absolute value sense. 

In regime I, i.e., $1+k_1-n(k_2-k_1)\leq 0$, we obtain $(k_2+\frac{k_2-k_1}{n_2-1} ) - ( k_1+\frac{k_1+1}{n_1} )\leq 0$, yielding that  
 the line connecting points A and C has a higher slope than the one joining points C and E. Thus, if the lines of iso-costs have a slope higher than the slope of the line A-C, then the minimum cost is obtained at point A. Similarly, if the slope is less than that of line C-E, then the minimum cost is obtained at point E. Otherwise, the minimum is obtained at point C. Recall that the slope of the lines of iso-costs is equal to $c_P/c_{NP}$ which leading to the result.

In the other regime II, i.e., $1+k_1-n(k_2-k_1)>0$, the slope of line A-C is not always greater than that of line C-E. Specifically, we obtain a threshold $n_2=\frac{1+k_1}{1+k_1-n_1(k_2-k_1)}$ over which the slop of line C-E is greater than line A-C. Therefore, if $n_2<\frac{1+k_1}{1+k_1-n_1(k_2-k_1)}$, the optimal network design is the same as those in regime I. In addition, when $n_2\geq\frac{1+k_1}{1+k_1-n_1(k_2-k_1)}$, and if the slop of iso-costs lines, i.e., $c_P/c_{NP}$, is larger than the slope of the line connecting points A and E, the minimum cost is achieved at point A. Otherwise, if  $c_P/c_{NP}$ is smaller than the slop of line A-E, the optimal network configuration is obtained at point E. 
\end{proof}

From Proposition \ref{prop:theo}, we can conclude that in regime I, i.e., $1+k_1-n(k_2-k_1)\leq 0$, when the unit cost of protected links is relatively larger than the non-protected ones, then the secure IoT networks admit an $s^D_0$ strategy using all non-protected links. In comparison, the secure IoT networks are constructed with solely protected links when the cost per protected link is relatively small satisfying $ {c_P} < (k_1+1+\frac{k_1+1}{n_1})c_{NP}/2$. Note that the optimal network design strategy in this regime can be achieved by protecting the minimum spanning tree for a connected network. Equivalently speaking, finding a spanning tree method provides an algorithmic approach to construct the optimal network in this regime. Finally, when the cost per protected link is intermediate, the network designer allocates $n_2-1$ protected links connecting those critical nodes in set \set{2} while uses non-protected links to connect the nodes in \set{1}. In addition, the intralinks between two subnetworks are non-protected ones.

Note that the specific configuration of the optimal IoT network is not unique according to Proposition \ref{prop:theo}. To enhance the system reliability and efficiency, the network designer can choose the one among all the optimal topology that minimize the communication distance between devices.

Since the cyber threat in subnetwork 2 is more severe than that in subnetwork 1, i.e., $k_2\geq k_1$, thus the condition of regime II in Proposition \ref{prop:theo} ($1+k_1-n(k_2-k_1)> 0$) is not generally satisfied. We further have the following Corollary refining the result of optimal IoT network design in regime II.

\begin{corollary}\label{coro2}
Only when two subnetworks facing the same level of cyber threats, i.e., $k_1=k_2$, the optimal IoT network design follows the strategies in regime II. Moreover, $s^D_{n_2-1}$ cannot be an optimal network design in regime II.
\end{corollary}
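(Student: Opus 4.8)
The plan is to obtain Corollary~\ref{coro2} as an immediate consequence of Proposition~\ref{prop:theo}: the only assertion that needs a genuine argument is that the regime~II condition $1+k_1-n(k_2-k_1)>0$ is equivalent (under the standing hypotheses) to $k_1=k_2$; the rest is then just reading off the regime~II branch of Proposition~\ref{prop:theo}.

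\emph{Step 1: regime~II holds if and only if $k_1=k_2$.} First I would show that the defining inequality of regime~II forces $k_2=k_1$. Recall assumption~(A1), $k_1\le k_2$, and the standing hypothesis of Proposition~\ref{prop:theo} that $n_1\ge k_1+1$, together with $n_2\ge 1$; hence $n=n_1+n_2\ge (k_1+1)+1=k_1+2$. Suppose for contradiction that $k_2>k_1$. Then $k_2-k_1\ge 1$, so $n(k_2-k_1)\ge n\ge k_1+2$, and therefore $1+k_1-n(k_2-k_1)\le 1+k_1-(k_1+2)=-1<0$, contradicting that the parameters lie in regime~II. Hence in regime~II we must have $k_2=k_1$. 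Conversely, if $k_1=k_2$ then $k_2-k_1=0$ and $1+k_1-n(k_2-k_1)=1+k_1>0$, so every admissible parameter set with $k_1=k_2$ lies in regime~II. This proves the first assertion of the corollary: the regime~II construction rules govern the optimal design precisely when the two subnetworks face the same threat level.

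\emph{Step 2: $s^D_{n_2-1}$ is never optimal in regime~II.} With $k_1=k_2$ we have $k_2-k_1=0$, so the threshold in Proposition~\ref{prop:theo}(II) becomes $\frac{1+k_1}{1+k_1-n_1(k_2-k_1)}=\frac{1+k_1}{1+k_1}=1\le n_2$; hence case~II(1) there is vacuous and we are always in case~II(2), whose optimal strategies are $s^D_0$ (when $2\frac{c_P}{c_{NP}}\ge\frac{n_1(k_1+1)+n_2(k_2+1)}{n_1+n_2-1}$) and $s^D_{n-1}$ (otherwise). In particular $p=n_2-1$ never arises, which is exactly the claim that $s^D_{n_2-1}$ cannot be optimal in regime~II. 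The underlying geometric reason is that when $k_1=k_2=k$ the lower-bound curve of Fig.~\ref{fig:illus} collapses to two linear pieces --- slope $\frac{k+1}{2}$ on $[0,n-2]$ and slope $k+1$ on $[n-2,n-1]$ --- so it is concave; consequently the map $p\mapsto p\,c_P+m(p)c_{NP}$ is concave on $[0,n-1]$ and attains its minimum at an endpoint, i.e., at point $A$ ($p=0$) or point $E$ ($p=n-1$), never at the interior point $C$ ($p=n_2-1$).

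\emph{Expected obstacle.} There is no substantial difficulty here: this is genuinely a corollary of Proposition~\ref{prop:theo}. The only step requiring attention is Step~1, where the sign of $1+k_1-n(k_2-k_1)$ is pinned down only because the standing hypotheses force $n\ge k_1+2$; dropping that bound would allow regime~II to be nonempty for $k_2>k_1$ and destroy the dichotomy. A minor caveat one should flag is the degenerate case $n_2=1$, where $s^D_{n_2-1}$ literally coincides with $s^D_0$; the statement is then understood as saying that no network with a strictly positive, non-maximal number of protected links can be optimal in regime~II.
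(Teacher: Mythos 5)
Your proposal is correct and follows essentially the same route as the paper: you use $n_1\ge k_1+1$ (plus $n_2\ge 1$) to show the regime~II inequality $1+k_1-n(k_2-k_1)>0$ forces $k_2=k_1$, and then observe that the threshold $\frac{1+k_1}{1+k_1-n_1(k_2-k_1)}$ collapses to $1\le n_2$, so only case~II(2) applies and $s^D_{n_2-1}$ is excluded. Your added concavity remark about the lower-bound curve and your caveat for $n_2=1$ are sound supplements but do not change the argument.
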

\begin{proof}
Based on the condition $n_1\geq k_1+1$, we obtain $1+k_1-(n_1+n_2)(k_2-k_1)\leq n_1-(n_1+n_2)(k_2-k_1)$. Thus, when $k_2>k_1$, the condition of regime II ($1+k_1-n(k_2-k_1)> 0$) cannot be satisfied. Since $k_2\geq k_1$, then only $k_1=k_2$ yields $1+k_1>0$. Therefore, $n_2\geq\frac{1+k_1}{1+k_1-n_1(k_2-k_1)}=1$ always holds which leads to the result.
\end{proof}

 We then simplify the conditions leading to regime I and II as follows.
 
\begin{corollary}\label{coro3}
The IoT network design can be divided into two regimes according to the cyber threat levels. Specifically, when $k_2>k_1$, the optimal design strategy follows the one in regime I in Proposition \ref{prop:theo}, and otherwise ($k_1=k_2$) follows the one in regime II.
\end{corollary}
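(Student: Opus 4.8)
The plan is to observe that, under the standing hypotheses of Proposition~\ref{prop:theo}, the sign of the quantity $1+k_1-n(k_2-k_1)$ that distinguishes Regime~I from Regime~II is completely determined by whether $k_2=k_1$ or $k_2>k_1$. Since assumption~(A1) guarantees $k_1\leq k_2$, these two cases are exhaustive, so Corollary~\ref{coro3} will follow as a direct rephrasing of the dichotomy already stated in Proposition~\ref{prop:theo}.

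First I would handle the case $k_1=k_2$. Then $k_2-k_1=0$, so $1+k_1-n(k_2-k_1)=1+k_1\geq 1>0$, which is exactly the inequality defining Regime~II; hence the optimal design follows the Regime~II strategies. I would also note, consistently with Corollary~\ref{coro2}, that in this branch the Regime~II threshold $\frac{1+k_1}{1+k_1-n_1(k_2-k_1)}$ equals $1$, so $n_2\geq 1$ (assumption~(A2)) always puts us in sub-case II.2, and therefore $s^D_{n_2-1}$ is never optimal here.

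Next I would handle the case $k_2>k_1$. Because $\crit{1},\crit{2}\in\llbracket 0, |\ensNP| \rrbracket$ are integers, $k_2>k_1$ forces $k_2-k_1\geq 1$. Using the standing assumption $n_1\geq k_1+1$ of Proposition~\ref{prop:theo} together with $n=n_1+n_2$ and $n_2\geq 1$, I would chain
\[
1+k_1-n(k_2-k_1)\;\leq\; n_1-n(k_2-k_1)\;\leq\; n_1-n \;=\; -n_2 \;\leq\; -1 \;<\;0,
\]
where the first inequality uses $1+k_1\leq n_1$ and the second uses $k_2-k_1\geq 1$. Thus $1+k_1-n(k_2-k_1)\leq 0$, which is the condition defining Regime~I, so the optimal design follows the Regime~I strategies.

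I expect no substantive obstacle: the corollary is pure bookkeeping on top of Proposition~\ref{prop:theo}, leaning only on the integrality of the criticality levels and on assumptions~(A1)--(A2). The sole points meriting a line of justification are the implication $k_2>k_1\Rightarrow k_2-k_1\geq 1$ and, in the $k_1=k_2$ branch, the degeneration of the auxiliary Regime~II threshold to $1$; both are immediate from the model's definitions and present no real difficulty.
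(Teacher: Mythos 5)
Your proof is correct and follows essentially the same route the paper takes (the argument appears in the proof of Corollary~\ref{coro2}, on which Corollary~\ref{coro3} silently relies): chaining $1+k_1-n(k_2-k_1)\leq n_1-n(k_2-k_1)\leq -n_2<0$ via $n_1\geq k_1+1$ and $k_2-k_1\geq 1$ for the Regime~I branch, and noting the condition reduces to $1+k_1>0$ when $k_1=k_2$. Your added remark that the Regime~II threshold degenerates to $1$ is the same observation the paper uses to conclude $s^D_{n_2-1}$ is never optimal there.
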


We illustrate the optimal design strategies in Fig. \ref{regime_optimal} according to the heterogeneous security requirements and link creation costs ratio.

\begin{figure}[!t]
\begin{centering}
\includegraphics[width=1\columnwidth]{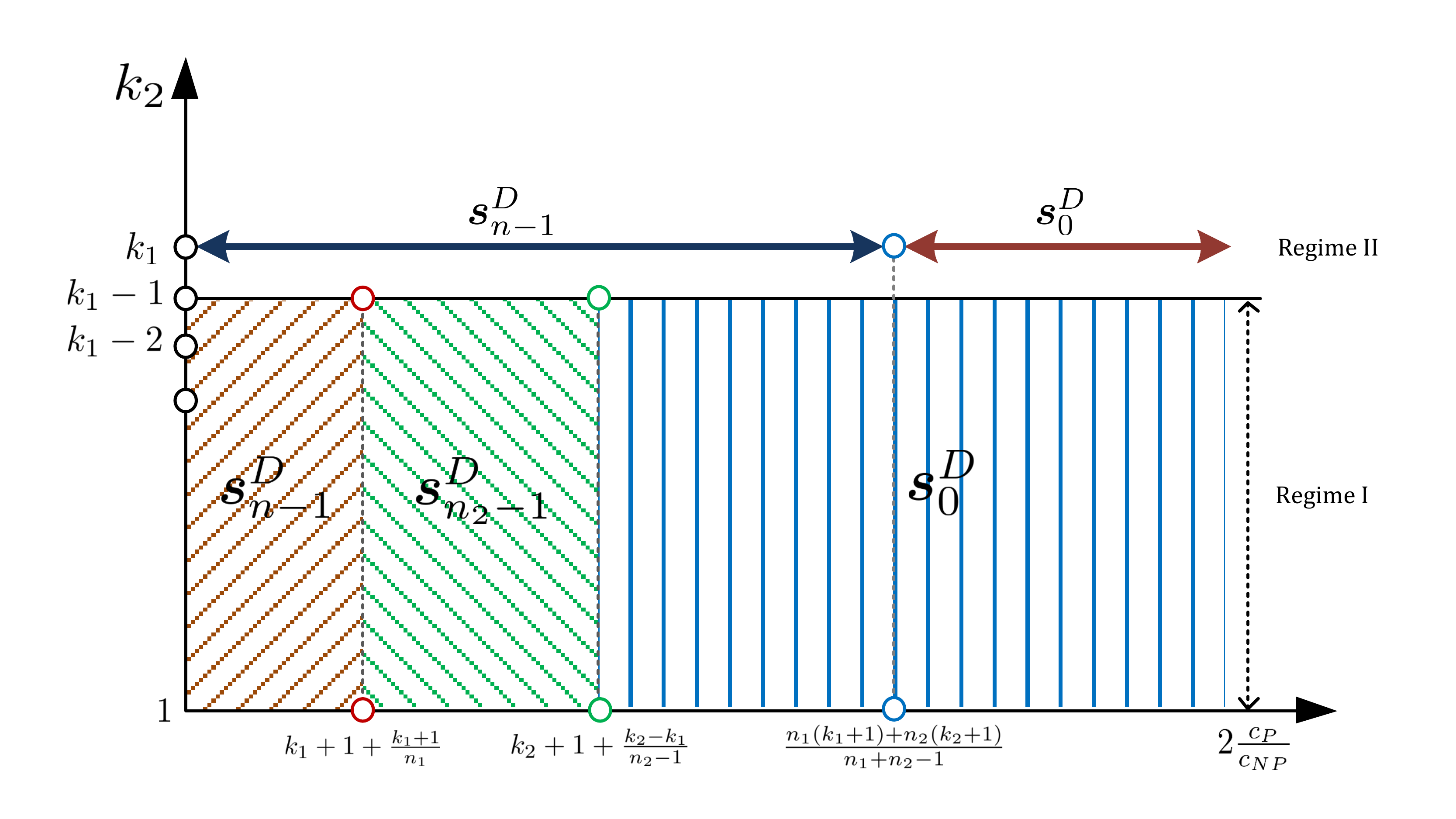} 
\par\end{centering}
\caption{\label{regime_optimal}
Optimal design of two-layer IoT networks in two regimes in terms of system parameters. When $k_2>k_1$, the optimal network design follows from the strategies in regime I which can be in any $s_{n-1}^D$, $s_{n_2-1}^D$ or $s^D_{0}$ depending on the value of $\frac{c_P}{c_{NP}}$. When $k_2=k_1$, the IoT network designer chooses strategies from regime II, either of $s_{n-1}^D$ or $s^D_{0}$ in term of the  link cost ratio $\frac{c_P}{c_{NP}}$.}
\end{figure}

\subsubsection{Robust Optimal Strategy}
One interesting phenomenon is that some strategies are optimal for a class of security requirements. Thus, these strategies are robust in spite of the dynamics of cyber threat levels. We summarize the results in the following Corollary.

\begin{corollary}\label{coro4}
Consider to design a $(k_1,k_2)$-resistant IoT network. If $s^D_{n-1}$ is the optimal strategy, then it is robust and optimal to security requirement for the network being $(k_1',k_2')$-resistant, for all $k_1'>k_1$ and all $k_2'>k_2$. If $s^D_{n_2-1}$ is the optimal strategy, then it is robust and optimal to cyber threat levels $(k_1,k_2')$, for all $k_2'>k_2$. Furthermore, the optimal strategy $s^D_0$ is not robust to any other security standards $(k_1',k_2')$, for $k_1'\neq k_1$ and $k_2'\neq k_2$.
\end{corollary}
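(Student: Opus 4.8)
The plan is to reduce the corollary to one monotonicity observation about the optimal design cost together with the cost-invariance of the two ``extreme'' strategies, and then dispatch the three assertions separately. Write $\mathrm{OPT}(a,b)$ for the optimal value of the designer's program with criticality levels $(a,b)$, and for each $p$ let $\mu_{a,b}(p)$ be the least number of non-protected links in an $(a,b)$-resistant network carrying exactly $p$ protected links, so $\mathrm{OPT}(a,b)=\min_p\{\,p\,c_P+\mu_{a,b}(p)\,c_{NP}\,\}$. The key preliminary I would prove first is: whenever $k_1\le k_1'$ and $k_2\le k_2'$, every $(k_1',k_2')$-resistant network is automatically $(k_1,k_2)$-resistant (requirements (a),(b) at the lower thresholds are implied by those at the higher ones), hence $\mu_{k_1,k_2}(p)\le\mu_{k_1',k_2'}(p)$ for all $p$, and therefore $\mathrm{OPT}(k_1,k_2)\le\mathrm{OPT}(k_1',k_2')$; note this needs none of the regime analysis of Proposition~\ref{prop:theo}. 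Two further structural facts I would record are: $s^D_{n-1}$ is a spanning tree of protected links, of cost $(n-1)c_P$ regardless of $(k_1,k_2)$, and it is $(a,b)$-resistant for \emph{every} $(a,b)$ since an attacker deletes only non-protected links; and, by Proposition~\ref{prop:1}(ii) (valid since $n_1\ge k_1+1$), an optimal $s^D_{n_2-1}$ may be taken as a protected spanning tree on \set{2} together with a $(k_1+1)$-Harary network on $\set{1}\cup\{n_1+1\}$, so its cost $(n_2-1)c_P+\ceil*{\frac{(n_1+1)(k_1+1)}{2}}c_{NP}$ does not involve $k_2$ and, since \set{2} is held together internally by unattackable links, this same network stays $(k_1,k_2')$-resistant for every $k_2'\ge k_2$.

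Granting this, the two positive statements drop out by sandwiching. If $s^D_{n-1}$ is optimal at $(k_1,k_2)$ and $k_1'>k_1$, $k_2'>k_2$, then
\[(n-1)c_P=\mathrm{OPT}(k_1,k_2)\le\mathrm{OPT}(k_1',k_2')\le\mathrm{cost}_{(k_1',k_2')}(s^D_{n-1})=(n-1)c_P ,\]
so equality holds throughout and $s^D_{n-1}$ is optimal at $(k_1',k_2')$: the first equality is the hypothesis, the middle inequality is the monotonicity just established, the last uses cost-invariance and feasibility. The identical three-term chain, with $s^D_{n_2-1}$ and its $k_2$-independent cost in place of $s^D_{n-1}$ and with $(k_1,k_2')$ in place of $(k_1',k_2')$, yields the statement for $s^D_{n_2-1}$.

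For the non-robustness of $s^D_0$ I would split on the signs of $k_1'-k_1$ and $k_2'-k_2$ (both nonzero by hypothesis). By Proposition~\ref{prop:1}(iii) any $s^D_0$ carries $\frac{n_1(k_1+1)+n_2(k_2+1)}{2}$ non-protected links and no protected one, so its vertex degrees sum to $n_1(k_1+1)+n_2(k_2+1)$; since (a) forces every degree to be at least $k_1+1$ and, when $n_2\ge 2$, (b) forces every type-$2$ degree to be at least $k_2+1$, the type-$2$ degrees sum to at least $n_2(k_2+1)$, hence the type-$1$ degrees sum to at most $n_1(k_1+1)$, and so some type-$1$ vertex has degree $\le k_1+1$ while some type-$2$ vertex has degree $\le k_2+1$. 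Consequently, if $k_1'>k_1$ an attack of size $k_1+1\le k_1'$ isolates that type-$1$ vertex and violates (a), and if $k_1'<k_1$ but $k_2'>k_2$ an attack of size $k_2+1\le k_2'$ isolates that type-$2$ vertex and violates (b); in both situations $s^D_0$ is not even $(k_1',k_2')$-resistant (the degenerate $n_2=1$ case forces $k_1=k_2$, i.e.\ regime~II, where $s^D_0$ is over-provisioned, which is absorbed into the case below). In the remaining case $k_1'<k_1$ and $k_2'<k_2$, $s^D_0$ does stay $(k_1',k_2')$-resistant, but it is no longer cost-minimal: the point-$A$ quantity $\frac{n_1(k_1+1)+n_2(k_2+1)}{2}$ that its cost equals is strictly increasing in both thresholds, so it drops by at least $n/2$ on passing to $(k_1',k_2')$, and a $(k_1',k_2')$-resistant network of size matching (a suitable rounding of) this smaller value — built by the construction of Proposition~\ref{prop:1}(iii) adapted to the parity of $k_1'$ — is strictly cheaper than $s^D_0$, so $s^D_0$ is not optimal for $(k_1',k_2')$.

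The main obstacle is precisely this last sub-case: one must exhibit, for the weakened thresholds, an \emph{explicit} $(k_1',k_2')$-resistant network whose link count provably undercuts $\frac{n_1(k_1+1)+n_2(k_2+1)}{2}$. The recipe of Proposition~\ref{prop:1}(iii) does not transfer verbatim when $k_1'$ is even or when $n_2\le k_2'-k_1'$, so one must instead assemble the network directly — a $(k_2'+1)$-edge-connected Harary block containing \set{2} (borrowing a few type-$1$ vertices when $n_2<k_2'+2$), attached to a spanning tree on the remaining vertices — and then verify that this block-plus-tree uses fewer than $\frac{n_1(k_1+1)+n_2(k_2+1)}{2}$ links; this inequality is occasionally tight, and its verification (which leans on the standing hypothesis $n_2\frac{k_1+1}{2}\le n_1$ to control $n_2$ relative to $n_1$) is where the bulk of the case-checking lies. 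The two positive statements, by contrast, are immediate once the monotonicity and cost-invariance observations are in hand.
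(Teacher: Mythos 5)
The paper gives no formal proof of this corollary---only the informal paragraph that follows it---so your argument is necessarily more detailed than anything in the source. For the two positive claims your proof is correct and is essentially a rigorous rendering of the paper's intuition: monotonicity of the optimal value in $(k_1,k_2)$ (since resistance at higher thresholds implies resistance at lower ones), the threshold-independence of the cost and feasibility of $s^D_{n-1}$, and the $k_2$-independence of the cost of $s^D_{n_2-1}$ (whose protected spanning tree on \set{2} makes condition (b) hold for every $k_2'$) combine via your three-term sandwich exactly as they should. Your degree-counting argument for the sub-cases $k_1'>k_1$ and ($k_1'<k_1$, $k_2'>k_2$) is also sound: under the stated parity assumptions every vertex of $s^D_0$ has exactly its minimum admissible degree, so an attack of size $k_1+1$ (resp.\ $k_2+1$) isolates a type-1 (resp.\ type-2) vertex and destroys resistance at the raised threshold.

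The genuine gap is in the remaining sub-case $k_1'<k_1$, $k_2'<k_2$, and it is not merely the deferred ``verification'': the replacement construction you sketch---a $(k_2'+1)$-edge-connected Harary block containing \set{2} attached to a spanning tree on the remaining type-1 vertices---is not $(k_1',k_2')$-resistant for any $k_1'\ge 1$, because a leaf of that tree is severed by a single link removal, violating condition (a). As written, the non-optimality of $s^D_0$ at strictly weaker thresholds is therefore not established. The repair stays inside the paper's toolbox: when $k_1<k_2$, a $(k_1,k_2-1)$-resistant network is a fortiori $(k_1',k_2')$-resistant, and the construction of Proposition~\ref{prop:1}(iii) applies verbatim to the pair $(k_1,k_2-1)$ (the hypotheses $k_1$ odd, $n_2>k_2-1-k_1$ and $n_2\frac{k_1+1}{2}\le n_1$ all persist), producing an all-non-protected network with $\ceil*{\frac{n_1(k_1+1)+n_2k_2}{2}}$ links, which is strictly below the $\frac{n_1(k_1+1)+n_2(k_2+1)}{2}$ links of $s^D_0$ once $n_2\ge 2$; the degenerate cases $k_1=k_2$ and $n_2=1$ are handled by a plain $(k_2'+1)$-Harary network on all $n$ nodes. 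With that substitution your proof is complete.
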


Corollary \ref{coro4} has a natural understanding on the selection of robust strategies. When the cyber threat level increases, then the optimal network $s^D_{n-1}$ remains to be optimal since the network construction cost does not increase under $s^D_{n-1}$. Under the optimal $s^D_{n_2-1}$, subnetwork 2 is connected with all protected links and the rest is connected by a Harary network with the minimum cost. If subnetwork 2 faces more attacks, ($k_2$ becomes larger), then $s^D_{n_2-1}$ is robust and optimal in the sense that subnetwork 2 remains secure and no other non-protected link is required.

\subsubsection{Construction of the Optimal Secure IoT Networks} 
We present the constructive methods of optimal IoT networks with parameters in different regimes based on Proposition \ref{prop:theo}. 

Specifically, the optimal $s^D_{n-1}$ can be constructed by any tree network using protected links. In addition, the optimal networks $s^D_{n_2-1}$ can be constructed in two steps as follows. First, we create a tree protected network on the nodes of \set{2}. Then, we construct a $(k_1+1)$-Harary network on the nodes of $\set{1}\cup \{n_1+1\}$, i.e., all nodes of type 1 and one node of type 2, where a constructive method of Harary network can be found in~\cite{harary}. 

Finally, regarding the optimal network $s^D_{0}$, we build it with the following procedure. First, we renumber the nodes  according to the sequence:
$
1, 2, ..., \frac{k_1+1}{2}, { {n_2}}, 
\frac{k_1+1}{2}+1, ..., k_1+1, { {n_2+1}}, 
k_1+2, ..., 3\frac{k_1+1}{2}, { {n_2+2}}, ...
$ Recall that this renumbering sequence can be achieved by interpolating one node in \set{2} after every $\frac{k_1+1}{2}$ nodes in \set{1}.
Then, we build a $(k_1+1)$-Harary network among all the nodes in \set{1} and \set{2}. Finally, we construct a $(k_2-k_1)$-Harary network on the nodes in \set{2}.

\subsubsection{Consideration of Random Link Failures}
In the considered model so far, the non-protected communication link between nodes is removed with probability 1 by the attack and remains connected without attack. In general, the non-protected links face random natural failures.
If we consider this random failure factor, then there is a probability that the designed optimal network will be disconnected under the joint cyber attacks and failures. We assume perfect connection of protected links and denote the random failure probability of a non-protected link by $\kappa\in[0,1)$.  Therefore, in the regime that the optimal network design is of Harary network where all links are non-protected, then under the anticipated level of cyber attacks, a single link failure of non-protected link will result in the network disconnection. Thus, the probability of network connection, i.e., mean connectivity, is equal to $(1-\kappa)^{\ceil*{\frac{n_1(k_1+1)+n_2(k_2+1)}{2}}-k_2}\approx (1-\kappa)^{\frac{n_1(k_1+1)+n_2(k_2+1)-2k_2}{2}}$ which is of order $(1-\kappa)^{\frac{n_1k_1+n_2k_2}{2}}$. Similarly, under the regime that the optimal network admits $n_2-1$ protected links and $\ceil*{\frac{(k_1+1)(n_1+1)}{2}}$ non-protected links, the probability of network connection under link failure is $(1-\kappa)^{\ceil*{\frac{(k_1+1)(n_1+1)}{2}}}\approx (1-\kappa)^{\frac{(k_1+1)(n_1+1)}{2}}$ which is of order $(1-\kappa)^{\frac{k_1n_1}{2}}$. We can see that in the above two regimes, when the security requirement is not relatively high and the size of the network is not large, the current designed optimal strategy gives a relatively high mean network connectivity.  In the regime that the optimal network is constructed with all protected links, then the mean network connectivity is 1 where the random failure effect is removed.

\section{Case Studies} \label{case_study}
In this section, we use case studies of IoBT to illustrate the optimal design principals of secure networks with heterogeneous components. The results in this section are also applicable to other mission-critical IoT network applications.

The IoBT network designer determines the optimal strategy on creating links with/without protection between agents in the battlefield. The ground layer and aerial layer in IoBT generally face different levels of cyber threats which aim to disrupt the network communications. Since UAVs become more powerful in the military tasks, they are the primal targets of the attackers, and hence the UAV network faces an increasing number of cyber threats.
 In the following case studies, we investigate the scenario that the IoBT network designer anticipates more cyber attacks on the UAV network than the soldier and UGV networks. 
  The cost ratio between forming a protected link and a unprotected link $\frac{c_p}{c_{NP}}$ is critical in designing the optimal IoBT network. This ratio depends on the number of channels used in creating a safe link though MTD. We will analyze various cases in the following studies.

\subsection{Optimal IoBT Network Design} \label{case1}

Consider an IoBT network consisting of $n_1=20$ soldiers and $n_2=5$ UAVs ($n=25$). The designer aims to design the ground network and the UAV network resistant to $k_1=5$ and $k_2=9$ attacks, respectively. Hence the global IoBT network is $(5,9)$-resistant. Based on Proposition \ref{prop:theo}, the system parameters satisfy the condition of regime I. Further, we have two critical points $T_1 := (k_1+1+\frac{k_1+1}{n_1})/2=3.15$ and $T_2:=(k_2+1+\frac{k_2-k_1}{n_2-1} )/2= 5.5$, at which the topology of optimal IoBT network encounters a switching. For example, when a protected link adopts 3 channels to prevent from attacks, i.e., $\frac{c_p}{c_{NP}}=3$, the optimal IoBT network is an $s^D_{24}$ graph as shown in Fig. \ref{case_A}\subref{case_A_1}. When a protected link requires 5 channels to be perfectly secure, i.e., $\frac{c_p}{c_{NP}}=5$, then the optimal IoBT network is of $s^D_{4}$ configuration which is depicted in Fig. \ref{case_A}\subref{case_A_2}. In addition, if the cyber attacks are difficult to defend against (e.g., require 7 channels to keep a link safe, i.e., $\frac{c_p}{c_{NP}}=7$), the optimal IoBT network becomes an $s^D_{0}$ graph as shown in Fig. \ref{case_A}\subref{case_A_3}. The above three types of optimal networks indicate that the smaller the cost of a protected link is, the more secure connections are formed starting from the UAV network to the ground network. 

\begin{figure}[!t]
  \centering
  \subfloat[$s^D_{24}$ IoBT network]{
    \includegraphics[width=1\columnwidth]{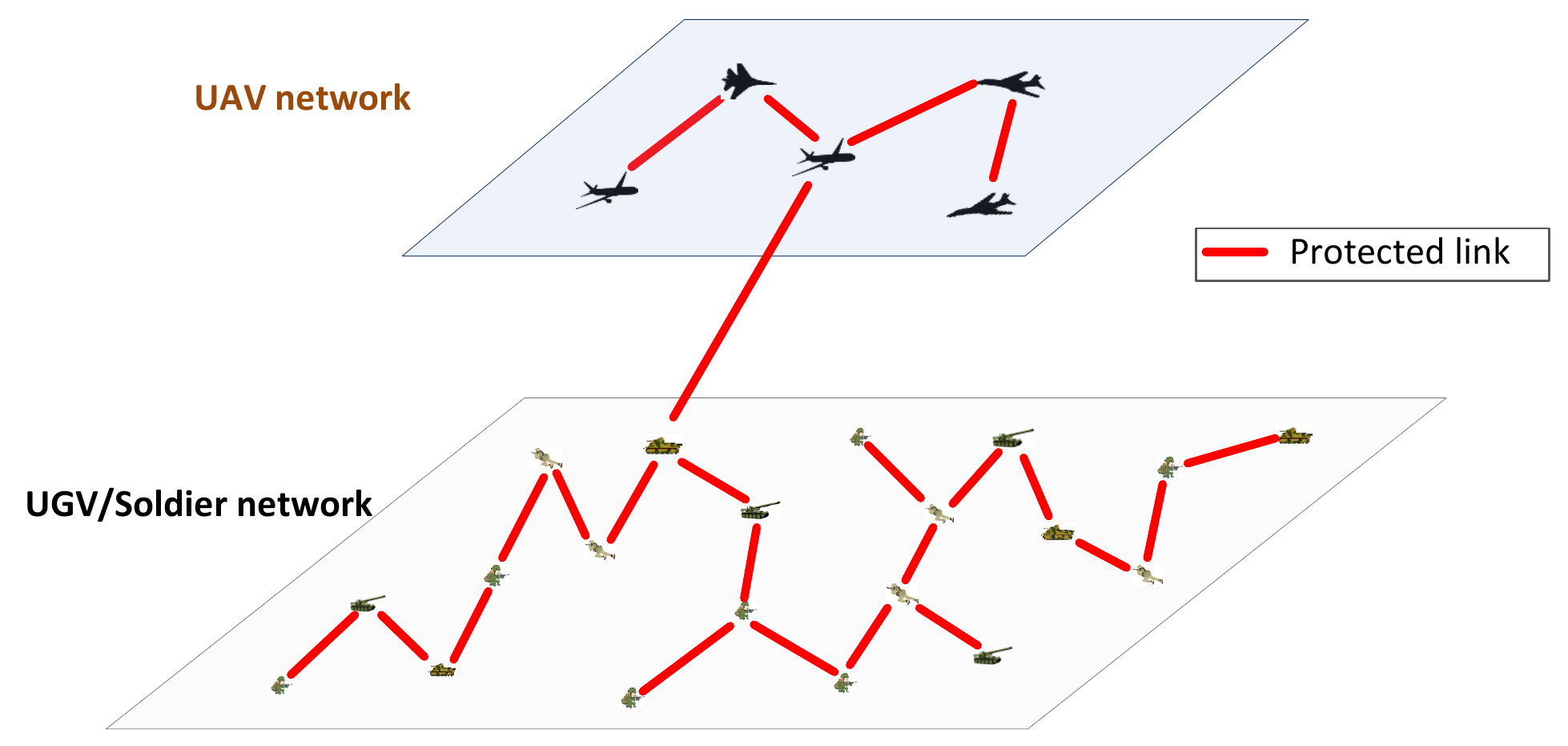}\label{case_A_1}}\\ 
	 \subfloat[$s^D_{4}$ IoBT network]{
    \includegraphics[width=1\columnwidth]{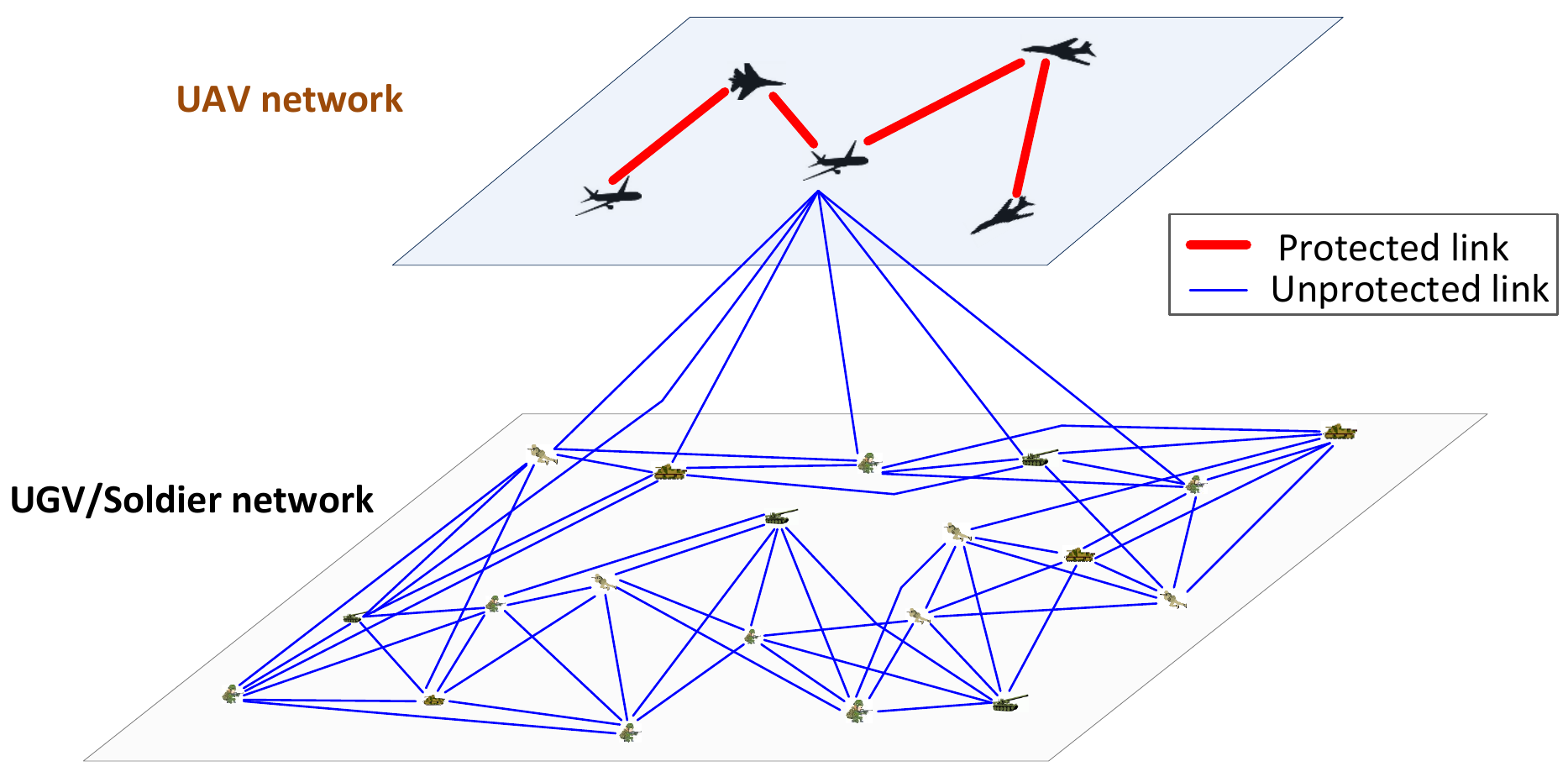}\label{case_A_2}} \\ 
      \subfloat[$s^D_{0}$ IoBT network]{
    \includegraphics[width=1\columnwidth]{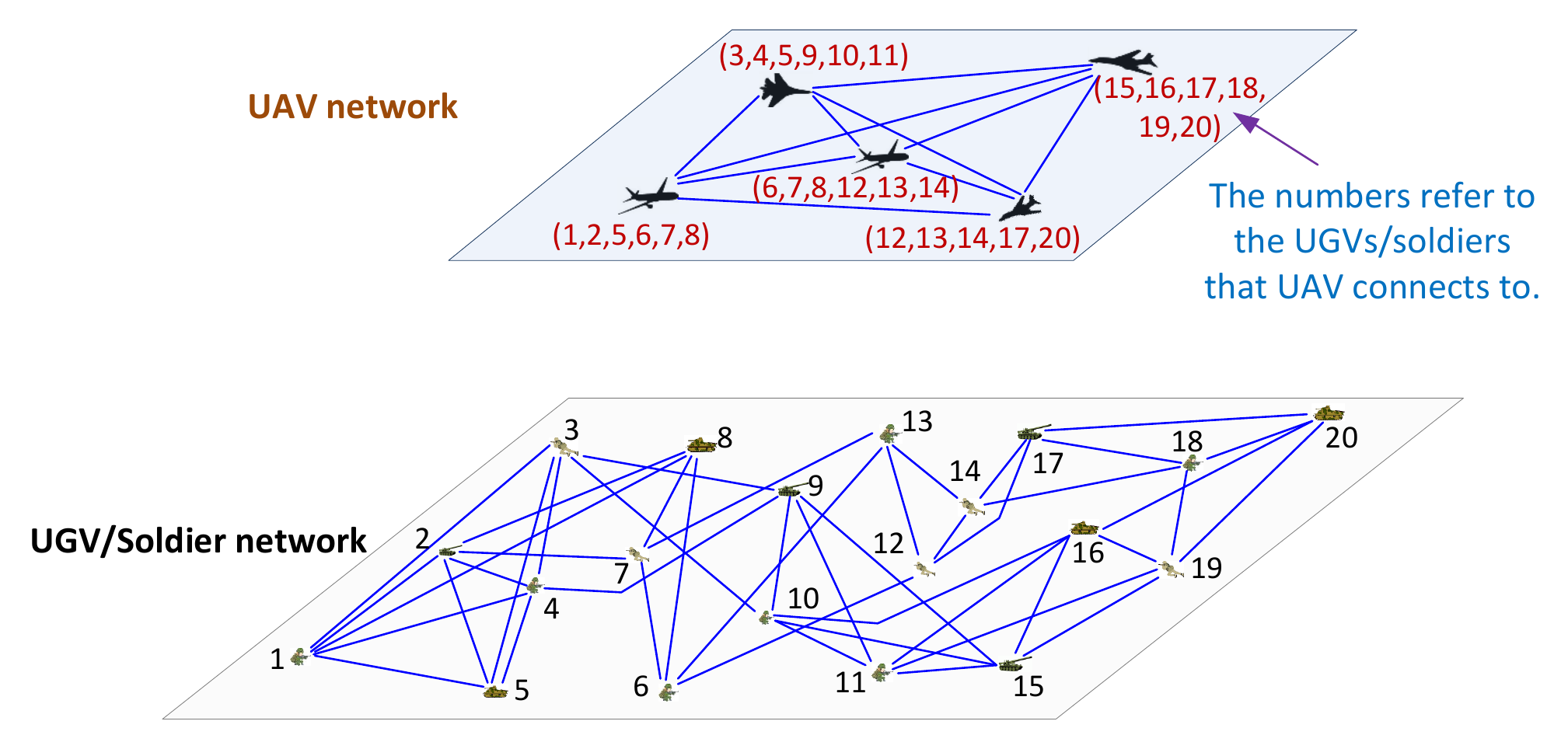}\label{case_A_3}} 
  \caption[]{ (a) When $\frac{c_p}{c_{NP}}=3<T_1$, the optimal IoBT network is an $s^D_{24}$ graph with all protected links. (b) When $T_1<\frac{c_p}{c_{NP}}=5<T_2$, the optimal network is an $s^D_{4}$ graph, where the UAV network is connected with protected links and the ground network with all unprotected links. (c) When $\frac{c_p}{c_{NP}}=7>T_3$, the optimal IoBT network adopts an $s^D_{0}$ configuration with all unprotected links.} 
  \label{case_A}
\end{figure}

\subsection{Resilience of the IoBT Network}

The numbers of UAVs, UGVs and soldiers can be dynamically changing. To study the resilience of the designed network, we first investigate the scenario that a number of UGVs/soldiers join the battlefield which can be seen as army backups. As $n_1$ increases, the threshold $T_1$ decreases slightly while $T_2$ remains unchanged. Therefore, the optimal IoBT network keeps with a similar topology except that the newly joined UGVs/soldiers connect to a set of their neighbors. To illustrate this scenario, we present the optimal network with $n_1=22$ and $\frac{c_p}{c_{NP}}=5$ in Fig. \ref{case_B}\subref{case_B_1}, and all the other parameters stay the same as those in Section \ref{case1}. When $n_1$ decreases, the network remains almost unchanged except those UGVs/soldiers losing communication links build up new connections with neighbors. An illustrative example with $n_1=17$ is depicted in Fig. \ref{case_B}\subref{case_B_2}.

\begin{figure}[!t]
  \centering
  \subfloat[]{
    \includegraphics[width=1\columnwidth]{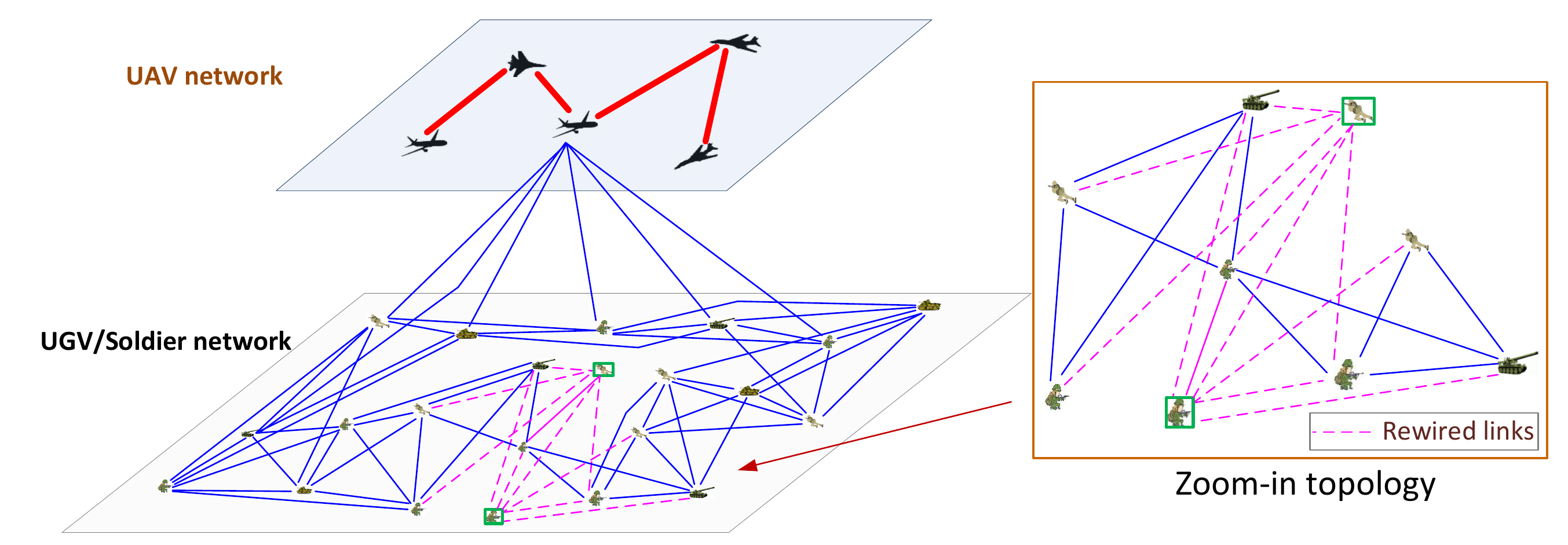}\label{case_B_1}}\\ 
	 \subfloat[]{
    \includegraphics[width=1\columnwidth]{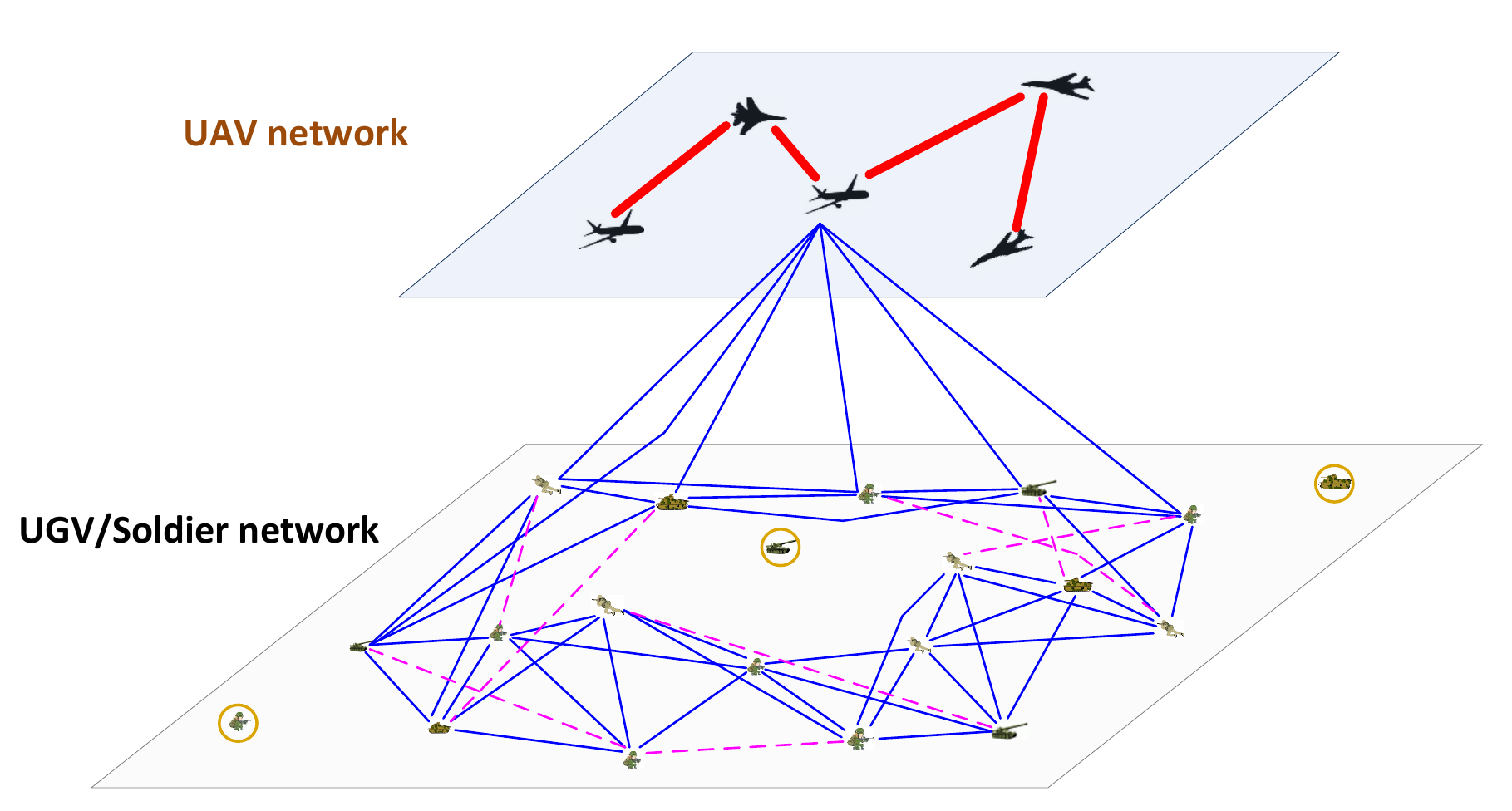}\label{case_B_2}} 
  \caption[]{(a) and (b) show the optimal IoBT network reconfiguration when two UGVs/soldiers join in and leave the battlefield, respectively.}
  \label{case_B}
\end{figure}

Another interesting scenario is that when the number of UAVs $n_2$ changes due to backup aerial vehicles joining in and current vehicles leaving the battlefield for maintenance. When $n_2$ increases, then the threshold $T_1$ remains the same while $T_2$ decreases. If the cost ratio $\frac{c_p}{c_{NP}}$ lies in the same regime with respect to $T_1$ and $T_2$ even though $T_2$ decreases, then under $\frac{c_p}{c_{NP}}\leq T_2$, the newly joined UAV will connect with another UAV with a protected link which either creates an $S^D_{n-1}$ or $s^D_{n_2-1}$ graph. Otherwise, if $\frac{c_p}{c_{NP}}>T_2$, the UAV first connects to other UAVs and then connects to a set of UGVs/soldiers both with unprotected links which yields an $s^D_0$ graph. When a number of UAVs leaving the battlefield, i.e., $n_2$, decreases, then $T_1$ stays the same and $T_2$ will increase under which the cost ratio $\frac{c_p}{c_{NP}}$ previous belonging to interval $\frac{c_p}{c_{NP}}\geq T_2$ may change to interval $T_1\leq \frac{c_p}{c_{NP}}\leq T_2$. Note that regime switching can also happen when $n_2$ increases. Therefore, the optimal IoBT network switches from $s^D_0$ to $s^D_{n_2-1}$ (for the increase of $n_2$ case, the switching is in a backward direction). For example, when the network contains $n_2=6$ UAVs and $\frac{c_p}{c_{NP}}=5.4$, and the other parameters are the same as those in Section \ref{case1}, from Proposition \ref{prop:theo}, the optimal IoBT network is an $s^D_{0}$ graph. However, Fig. \ref{case_A}\subref{case_A_2} shows that the optimal network adopts an $s^D_{4}$ topology when $n_2=5$. Therefore, by adding a UAV to the aerial layer, the optimal IoBT network switches from  $s^D_{4}$ to $s^D_0$ in this scenario. The interpretation is that a smaller number of UAVs is easier for the aerial network to defend against attacks, and hence protected links are used between UAVs instead of redundant unprotected links.

\subsection{Flexible Design and Robust Strategies}\label{robust_case}
In this section, we further investigate the secure IoBT network design in the presence of varying levels of cyber threats. Specifically, the parameters are selected as follows: $n_1=20$, $n_2 = 10,\ k_1=5$, and $\frac{c_P}{c_{NP}} = 5$. The security requirement $k_2$ takes a value varying from 5 to 14, modeling the dynamic or uncertain behaviors of the attacker targeting at the critical UAV network. The optimal IoBT network design is depicted in Fig. \ref{varying_k2}, and the corresponding cost is in shown Fig. \ref{total_cost}. When $k_2\in\llbracket 5, 8 \rrbracket$, the optimal IoBT network is constructed with all non-protected links. Since $k_2$ becomes larger, the number of non-protected links used is increasing, and thus the total cost increases. The optimal network topology switches from $s^D_0$ to $s_{9}^D$ when $k_2$ exceeds the threshold $8$. Then, when $k_2\in\llbracket 9, 14 \rrbracket$, the optimal IoBT network is unchanged as well as the associated construction cost. Despite the increases in $k_2$, no addition links are required since the UAV network (subnetwork 2) is connected with all protected links. Note that $s_{9}^D$ is a robust strategy in the sense that the IoBT network can be $(5,k_2)$-resistant, for all $k_2\in\llbracket 9, 14 \rrbracket$. This study can be generalized to the cases when the network designer has an uncertain belief on the attacker's strategy. Therefore, the IoBT designer can prepare for a number of attacking scenarios and choose from these designed strategies in the field with a timely and flexible manner.

\begin{figure}[!t]
\begin{centering}
\includegraphics[width=.95\columnwidth]{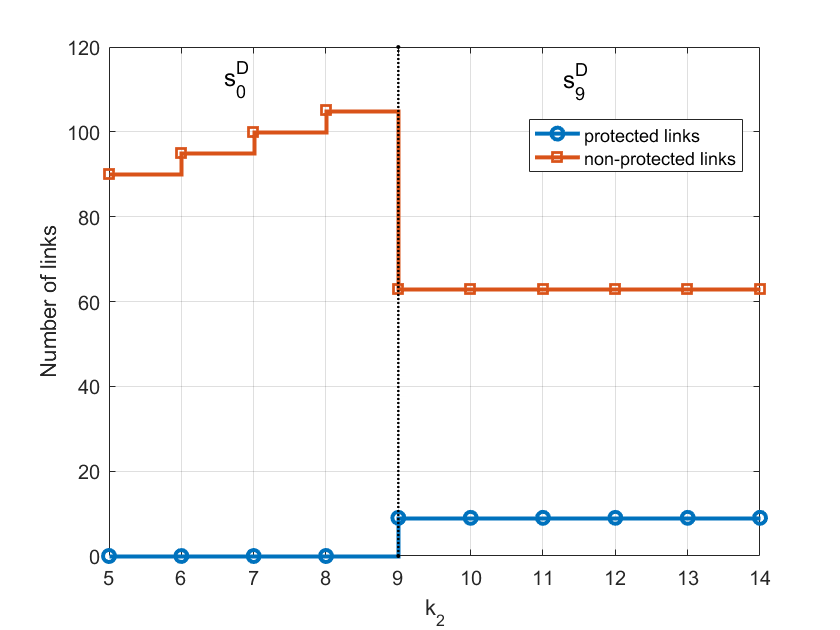}
\par\end{centering} 
\caption{\label{varying_k2}
Optimal IoBT network design with parameters $n_1=20$, $n_2 = 10,\ k_1=5$, $\frac{c_P}{c_{NP}} = 5$, and $k_2$ taking a value from 5 to 14. When $k_2\in\llbracket 5, 8 \rrbracket$, the optimal network design is in the form of $s_0^D$. When $k_2\in\llbracket 9, 14 \rrbracket$, the optimal network admits a strategy of $s_{9}^D$. Note that $s_{9}^D$ is robust to a dynamic or varying number of cyber attacks ranging from 9 to 14. }
\end{figure}

\begin{figure}[!t]
\begin{centering}
\includegraphics[width=.95\columnwidth]{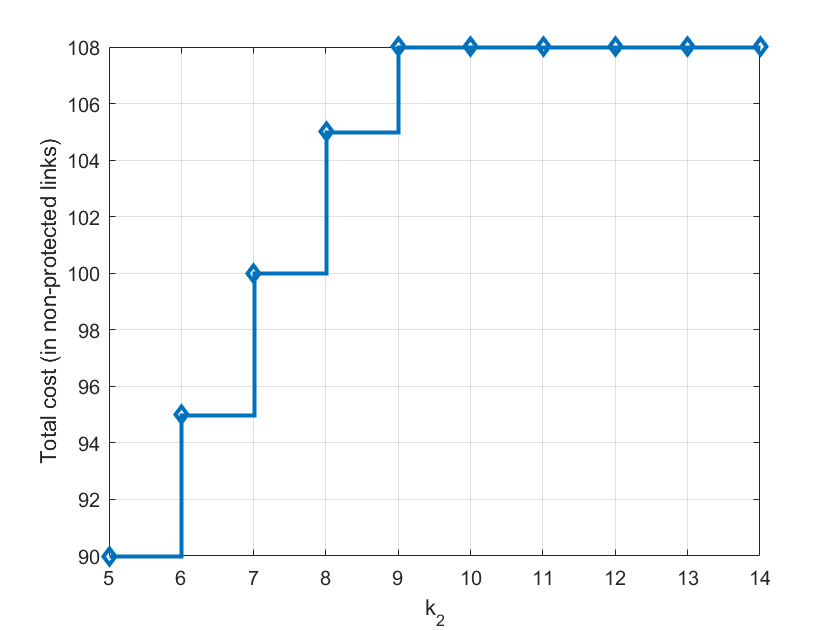}
\par\end{centering} 
\caption{\label{total_cost}
The total cost of optimal network design in terms of the number of non-protected links. In the regime of $k_2\in\llbracket 5, 8 \rrbracket$, with a larger $k_2$, the IoBT network requires more non-protected links to be resistant to attacks. In the regime of $k_2\in\llbracket 9, 14 \rrbracket$, the total cost remains the same, since  the UAV network is connected with all protected links and no additional non-protected link is required despite of the increasing cyber threats $k_2$.}
\end{figure}

\section{Conclusion}\label{conclusion}
In this work, we have studied a two-layer secure network formation problem for IoT networks in which the network designer aims to form a two-layer communication network with heterogeneous security requirements while minimizing the cost of using protected and unprotected links. We have shown a lower bound on the number of non-protected links of the optimal network and developed a method to construct networks that satisfy the heterogeneous network design specifications. We have demonstrated the design methodology in the IoBT networks. It has been shown that the optimal network can reconfigure itself adaptively as nodes enter or leave the system. In addition, the optimal IoBT network configuration may encounter a topological switching when the number of UAVs changes. We have further identified the optimal design strategies that can be robust to a set of security requirements. As part of the future work, we would extend the single network designer problem to a two-player one, where each player designs their own subnetwork in a decentralized way. Another direction will be generalizing the current bi-level network to more than two layers and designing the optimal strategies.

\bibliographystyle{IEEEtran}
\bibliography{IEEEabrv,references}

\end{document}